\DeclareMathOperator{\Tr}{Tr}
\pgfplotsset{compat=newest} 
\pgfplotsset{plot coordinates/math parser=false}
\newlength\figureheight 
 \newlength\figurewidth 
\newcommand{\be}{\begin{equation}}
\newcommand{\ee}{\end{equation}}
\newcommand{\bear}{\begin{eqnarray}}
\newcommand{\eear}{\end{eqnarray}}
\newcommand{\bears}{\begin{eqnarray*}}
\newcommand{\eears}{\end{eqnarray*}}
\newcommand{\bi}{\begin{itemize}}
\newcommand{\ei}{\end{itemize}}
\newcommand{\ben}{\begin{enumerate}}
\newcommand{\een}{\end{enumerate}}
\newtheorem{theorem}{Theorem}
\newtheorem{lemma}[theorem]{Lemma}
\newtheorem{example}[theorem]{Example}
\newcommand{\rank}{\text{rank}}
\DeclareMathOperator{\minrk}{minrk}
\begin{document}

\title{ Index Coding and Network Coding via Rank Minimization}

\author{
\IEEEauthorblockN{Xiao Huang and Salim El Rouayheb\\ ECE Department, IIT, Chicago\\ Emails: xhuang31@hawk.iit.edu, salim@iit.edu}
}
\maketitle
\begin{abstract}

Index codes reduce the number of bits broadcast by a wireless transmitter to a number of receivers with different demands and with side information. It is known that the problem of finding  optimal linear index codes is NP-hard. We  investigate the performance of different heuristics based on rank minimization and matrix completion methods, such as alternating projections and alternating minimization,  for constructing linear index codes over the reals.  As a summary of our results,  the alternating projections method gives the best results in terms of   minimizing the number of broadcast bits and convergence rate  and leads to up to $13\%$  savings in average communication cost compared to graph coloring algorithms studied in the literature. Moreover, we describe how the proposed methods can be used to construct linear network codes for non-multicast networks. Our computer code is available online.
\end{abstract}

\section{Introduction}\label{sec:Intro}

\begin{figure}[b]
\centering
\includegraphics[scale=0.5]{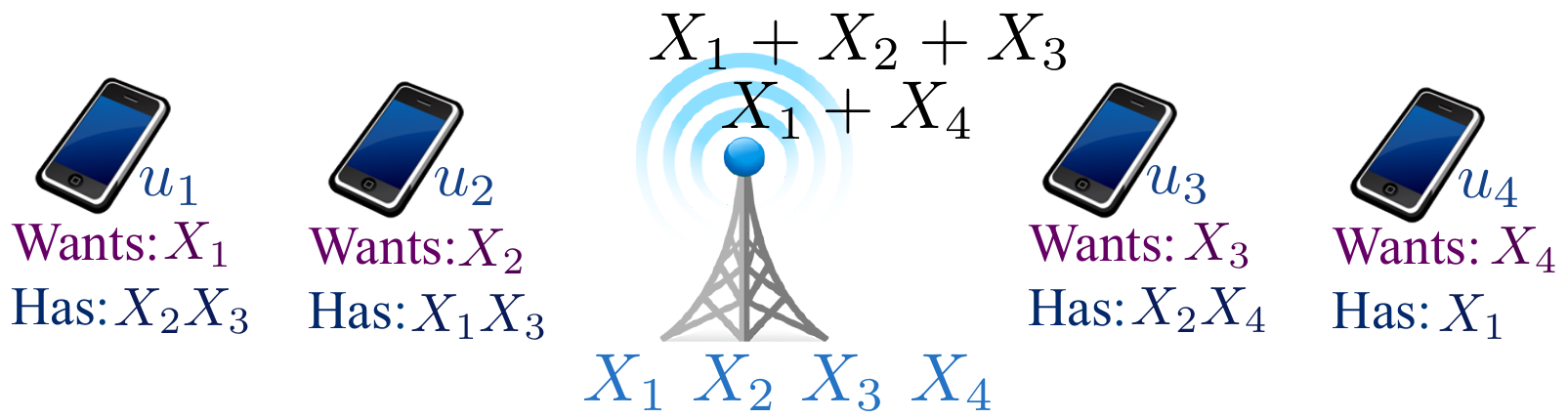}\vspace{-.2cm}  
\caption{ An index code example.}
\label{fig:ex1}
\end{figure}

We investigate the performance of different rank minimization heuristics for constructing linear index codes \cite{BirkKol,ISCOD2006}, and therefore linear network codes using the equivalence in \cite{RSG10, EEL13}. Index codes  reduce the number of bits broadcast by a wireless transmitter that wishes to satisfy the different demands of a  number of receivers with side information in their caches. Fig.~\ref{fig:ex1} illustrates an  index coding example.   A wireless transmitter  has $n=4$  packets, or messages,  $X_1,\dots, X_4,$  and there are  $n=4$ users (receivers) $u_1,\dots, u_4$. User $u_i$ wants packet $X_i$ and  has a subset of the packets as side information. The packets in the cache could have been obtained in a number of ways:  packets downloaded earlier,  overheard packets or packets downloaded during off-peak network hours.   Each user reports to the transmitter the indices of its requested  and cached packets, hence the nomenclature index coding \cite{BBJK06}.   Assuming  an error-free  broadcast channel, the objective is to design a coding  scheme at the transmitter, called index code,  that  satisfies the demands of all the users while minimizing the number of broadcast messages. For instance,  the transmitter can always satisfy the demands of all the users by broadcasting all the four packets. However, it can save half of the broadcast rate by transmitting only  $2$ coded packets, $X_1+X_2+X_3$ and $X_1+X_4$ to the users.  Each user can decode its requested packet by using the broadcast packets and its side information. The problem that we focus on here is  is how to construct linear index codes that minimize the number of broadcast messages.

 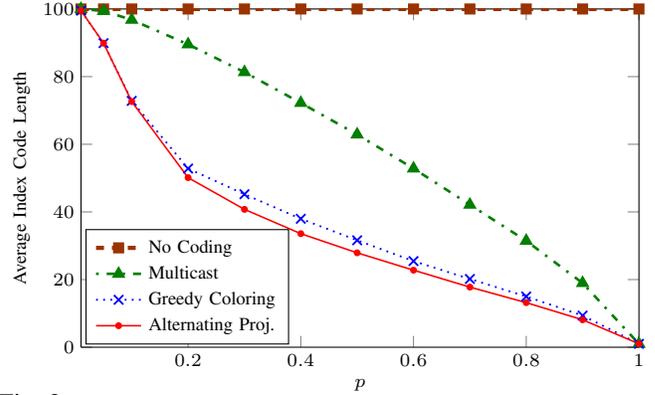
\begin{figure}[t]
\centering  \scriptsize
 \setlength\figureheight{4.5cm} 
\setlength\figurewidth{7.8cm}
%
%
%
\definecolor{mycolor1}{rgb}{0.60000,0.20000,0.00000}%
\definecolor{mycolor2}{rgb}{0.00000,0.49804,0.00000}%
\begin{tikzpicture}

\begin{axis}[%
width=0.950920245398773\figurewidth,
height=\figureheight,
at={(0\figurewidth,0\figureheight)},
scale only axis,
xmin=0.01,
xmax=1,
xlabel={$p$},
ymin=0,
ymax=100,
ylabel={Average Index Code Length},
legend style={at={(0.008,0.008)},anchor=south west,draw=black,fill=white,legend cell align=left}
]
\addplot [color=mycolor1,dashed,line width=1.6pt,mark size=1.3pt,mark=square*,mark options={solid,fill=mycolor1}]
  table[row sep=crcr]{%
0.01	100\\
0.05	100\\
0.1	100\\
0.2	100\\
0.3	100\\
0.4	100\\
0.5	100\\
0.6	100\\
0.7	100\\
0.8	100\\
0.9	100\\
1	100\\
};
\addlegendentry{No Coding};

\addplot [color=mycolor2,dash pattern=on 1pt off 3pt on 3pt off 3pt,line width=1.0pt,mark=triangle*,mark options={solid,fill=mycolor2}]
  table[row sep=crcr]{%
0.01	100\\
0.05	99.4\\
0.1	96.8\\
0.2	89.54\\
0.3	81.32\\
0.4	72.22\\
0.5	62.89\\
0.6	52.83\\
0.7	42.12\\
0.8	31.45\\
0.9	18.97\\
1	1\\
};
\addlegendentry{Multicast};

\addplot [color=blue,dotted,line width=0.7pt,mark size=2.5pt,mark=x,mark options={solid}]
  table[row sep=crcr]{%
0.01	99.507\\
0.05	89.9\\
0.1	72.84\\
0.2	52.83\\
0.3	45.23\\
0.4	37.97\\
0.5	31.61\\
0.6	25.46\\
0.7	20.18\\
0.8	15\\
0.9	9.36\\
1	1\\
};
\addlegendentry{Greedy Coloring};

\addplot [color=red,solid,line width=0.6pt,mark size=1.0pt,mark=*,mark options={solid}]
  table[row sep=crcr]{%
0.01	99.507\\
0.05	89.9\\
0.1	72.63\\
0.2	50.13\\
0.3	40.76\\
0.4	33.57\\
0.5	27.91\\
0.6	22.78\\
0.7	17.77\\
0.8	13.18\\
0.9	8.11\\
1	1\\
};
\addlegendentry{Alternating Proj.};

\end{axis}
\end{tikzpicture}
\caption{\footnotesize Comparison of different methods for constructing scalar linear index codes for $n=100$ users  and messages. Each user caches each message independently with probability $p$ (except its requested message).}
\label{fig:simu1}  
\vspace{-.6 cm} 
\end{figure}

\noindent\paragraph*{Contribution} Answering the  question above turns out to be an NP-hard problem in general  \cite{RCA07, Pee96,LangbergSprintson11}. Motivated by a connection between  linear index codes and rank minimization \cite{BBJK06} (details in Sec.~\ref{sec:minrk}), we  propose to use   rank minimization and matrix completion methods to   construct linear index codes. The underlying matrices representing an index coding problem have a special structure that affects the performance of these methods. For instance, the celebrated nuclear norm minimization method \cite{candes2009exact, recht2010guaranteed} does not perform well here. We present our findings on the performance of different other methods, such as alternating projections, directional alternating projections and alternating minimization, through extensive simulation results on random instances of the index coding problem. These methods are  performed over the real numbers and give linear index codes over the reals which have applications to topological interference management in wireless networks \cite{Syed, jafar2013topological}.
As a sample of our results, Fig.~\ref{fig:simu1} compares the performance of index codes obtained by the Alternating Projection (AP) method to other methods studied in the literature. We assumed that packets are cached independently and randomly with probability $p$. The figure shows the savings in communication cost resulting from using index codes compared to no-coding and  multicast network coding (all users decode all messages). The AP method leads to up to $13\%$ average savings in broadcast messages compared to graph coloring  \cite{BirkKol, RCA07}.

 Over the recent years, several  connections have been established between index coding and other problems. These connections can be leveraged to apply the rank minimization methods presented here  to these equivalent problems.
For instance, using the reduction  between index coding and network coding devised in \cite{RSG10, EEL13} to show the equivalence of the two problems,   the methods proposed here could be readily applied to construct linear network codes over the reals  \cite{Jaggi08, Gastpar12} for general non-multicast networks. Similarly, these methods can be used to construct  certain class of locally repairable codes (over the reals) using the duality between index codes and locally repairable codes established in \cite{Arya, Dimakis14}. Our  computer code for constructing linear index codes, network codes and locally repairable codes is available online \cite{APlink}.

\noindent\paragraph*{Related work} Index coding was  introduced  by Birk and Kol in  \cite{BirkKol} as a caching problem in satellite communications. The work of  \cite{BBJK06} established the connection between linear index codes and  the minimum rank of the side information graph representing the problem.
The sub-optimality of linear index codes was  shown in \cite{LS07, RSG08, BKL11}.  The work of   \cite{ALSW08} further explored the connection to graph coloring  and  studied  properties of index coding on  the direct sums of graphs.  Linear programming bounds were  studied in \cite{blasiak2010index} and connections to local graph coloring   and multiple  unicast networks were investigated in \cite{DimakisLocal13} and \cite{Dimakis14}, respectively. The work in  \cite{haviv2012linear} investigated the property of  index codes on random graphs. Tools from network information theory  \cite{KimIndex13,   Kim14} and distributed source coding \cite{unal2013general} were also used to  tackle the index coding problem.  Related to index coding is the line of work on  distributed caching in  \cite{MohammadCaching,maddah2013decentralized}.  Recently, a matrix completion method for constructing linear index codes over finite fields was proposed  in \cite{Hassibi2015}, and a method for constructing quasi-linear vector    network codes over the reals was described in \cite{Schwartz14}.

\noindent\paragraph*{Organization} The rest of the paper is organized as follows. In Section~\ref{sec:Model}, we describe the mathematical model of the index coding problem and the assumptions we make. In Section~\ref{sec:Conn}, we summarize the connections of index coding to graph coloring, rank minimization and topological interference management. In Section~\ref{sec:undir}, we focus on index coding instances that can be represented by undirected graphs. We describe the different rank minimization methods and  our  simulation results. In Section~\ref{sec:dir}, we describe the performance of these methods for directed graphs. In Section\ref{sec:RnkMin}, we elaborate more on the use of rank minimization methods for constructing linear network codes. We conclude in Section~\ref{sec:conclusion}.

\section{Model}\label{sec:Model}
An instance of the index coding problem is defined as follows. A transmitter or server holds a set of $n$  messages or packets, $\mathcal{X}=\{X_1,\dots,X_n\}$, where the $X_i$'s belong to some alphabet. There are $m$ users,  $u_1, \dots,u_m$. Let $W_i\subset \mathcal{X}$ (``wants" set) represents the packets requested by $u_i$, and the set $H_i\subset\mathcal{X}$ (``has" set) represents the packets available to $u_i$ as side information in its cache. WLOG, we can assume that $W_i$ contains only  one packet, otherwise the user can be represented by  multiple users satisfying this condition.  We assume that  initially the transmitter does not know which packets are cached at each user, and the users tell the transmitter the indices of the packets they have in an initial stage. Typically, the alphabet size (packet length) is much larger than the number of packets $n$, so the overhead in the initial stage is negligible. The transmitter uses an error-free broadcast channel to transmit information to the terminals. The objective is to design a coding  scheme at the transmitter, called index code,  that  satisfies the demands of all the users while minimizing the number of broadcast bits. 
We will focus on linear index codes in which the messages belong to a certain field ($GF(q)$ or $\mathbb{R}$) and the transmitted messages are linear combinations of these messages. Linear index codes are known not to be optimal \cite{RSG08} and the gap to optimality can be arbitrarily large \cite{LS07}.  However, we focus  on linear codes due to their tractability. For clarity of exposition, we make the following two assumptions:

\begin{enumerate}
\item  The number of users is equal to the number of messages ($n=m$). We will assume that  user $u_i$ requests message $X_i$, i.e., $W_i=\{X_i\}$. It was shown in \cite{jafar2012elements} that any general instance, $m\geq n$, can be reduced to this model with no loss of generality for  linear codes.

\item The messages are atomic units that cannot be divided. This corresponds to scalar linear index codes. We refer to the number of broadcast messages as the index code length. We denote by $L_{min}$ the minimum number of broadcast messages achieved by scalar linear index codes. Our methods could be easily extended to vector linear index codes for a  givenblock length.
\end{enumerate}

\section{Connections to Other Problems}\label{sec:Conn}
\subsection{Index Coding \& Graph Coloring}
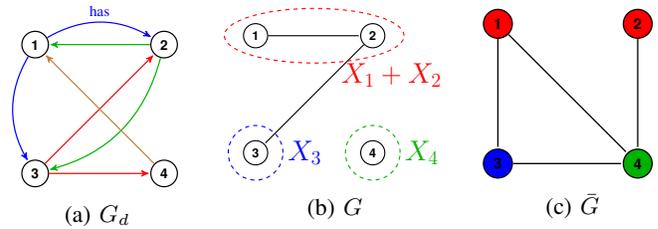
\begin{figure}[t]
\centering  
\begin{subfigure}{0.14\textwidth}
  \resizebox{0.95\textwidth}{!}{
      \begin{tikzpicture}[->,>=stealth',shorten >=1pt,auto,node distance=3cm,
  thick]
  \tikzstyle{sender} = [circle,draw,font=\sffamily\small \bfseries]
  \node[sender] (1) {1};
  \node[sender] (2) [right of =1] {2};
  \node[sender] (3) [below of =1] {3};
  \node[sender] (4) [below of =2] {4};
   \path[every node/.style={}]
   (1)   edge  [bend left, color=blue] node {has} (2)
           edge [bend right, color=blue] (3) 
   (2)   edge [ color=green!70!black] (1) 
           edge [bend left , color=green!70!black] (3)
   (3)   edge [ color=red] (2) 
           edge [ color=red] (4)
   (4)   edge [ color=brown] (1) ; 
\end{tikzpicture}
}
\caption{$G_d$}
\end{subfigure}
~
\begin{subfigure}{0.17\textwidth}
  \resizebox{1\textwidth}{!}{
     \begin{tikzpicture}[-,>=stealth',shorten >=1pt,auto,node distance=3cm,
  thick]
  \tikzstyle{sender} = [circle,draw,font=\sffamily\small \bfseries]
  \node[sender] (1) {1};
  \node[sender] (2) [right of =1] {2};
  \node[sender] (3) [below of =1] {3};
  \node[sender] (4) [below of =2] {4};
  \path[every node/.style={}]
   (1)   edge (2)  
   (2)   edge (3); 
 \draw[red,thick,dashed] ($(1)+(1.5, 0)$) ellipse (2.3cm and 0.7cm) node [below right =0.8cm, color=red, font=\huge] {$X_1+X_2$};
  \draw[blue,thick,dashed] ($(3)$) circle (0.7cm) node [right =0.7cm, color=blue, font=\huge] {$X_3$} ;
  \draw[green!70!black,thick,dashed] ($(4)$) circle (0.7cm)node [right =0.7cm, color=green!70!black, font=\huge] {$X_4$}; 
\end{tikzpicture}
}
\caption{$G$}
\end{subfigure}
~
\begin{subfigure}{0.14\textwidth}
  \resizebox{0.95\textwidth}{!}{
     \begin{tikzpicture}[-,>=stealth',shorten >=1pt,auto,node distance=3cm,
  thick]
  \tikzstyle{sender1} = [circle,fill=red,draw,font=\sffamily\small \bfseries]
    \tikzstyle{sender3} = [circle,fill=blue,draw,font=\sffamily\small \bfseries]
    \tikzstyle{sender4} = [circle,fill=green!70!black,draw,font=\sffamily\small \bfseries]
  \node[sender1] (1) {1};
  \node[sender1] (2) [right of =1] {2};
  \node[sender3] (3) [below of =1] {3};
  \node[sender4] (4) [below of =2] {4};
  \path[every node/.style={}]
   (1)   edge (3)  
           edge (4)  
   (2)   edge (4)
   (3)   edge (4); 
\end{tikzpicture}
}
\caption{$\bar{G}$}
\end{subfigure} 
\vspace{-.1cm}  
\caption{\footnotesize (a) Side information graph $G_d$ of the example in Fig.~\ref{fig:ex1}. (b) A clique cover for its undirected subgraph $G$ and the corresponding index code. (c) Graph coloring of the complement graph $\bar{G}$ corresponding to the clique cover in $G$.} 
\label{fig:graph}
\vspace{-.5cm}  
\end{figure}
%

The minimum scalar linear index codes length $L_{min}$ can be upper bounded by the chromatic number of a certain graph. An index coding problem, with $n$ messages and $m=n$ users\footnote{In the case where there are more users than messages, i.e., $m>n$, the index coding problem can be represented by a multigraph \cite{ALSW08} or a bipartite graph \cite{Dimakis14}.}, can be represented by a directed graph $G_d$, referred to as side information graph,  defined on the vertex set $\{1,2,...,n\}$. An edge $(i,j)$ is in the edge set of $G_d$ iff  user $u_i$ has packet $X_j$ as side information. 

%
%
The side information graph $G_d$ representing the instance in  Fig.~\ref{fig:ex1} is depicted in Fig.~\ref{fig:graph}(a). Its maximal undirected subgraph $G$ in Fig.~\ref{fig:graph}(b) is obtained from $G_d$ by replacing any two  edges in opposite directions  by an undirected edge, and removing the remaining directed  edges. We will say that $G_d$ is undirected if $G_d$ and $G$ are the same graph. A fully connected subgraph (clique) of $G$ represents a subset of users that can be satisfied simultaneously by broadcasting a single coded packet that is the XOR of all the packets indexed by  the clique.  Therefore, a partition of $G$ into cliques gives a scalar linear index code over any field. We can optimize such a partition in order to obtain a  minimum number  of cliques. Such a number is called the minimum clique cover,  $\bar{\chi}(G)$, of $G$ (see Fig.~\ref{fig:graph}(b)). Note that the minimum clique cover number $\bar{\chi}(G)$ is equal to the chromatic number $\chi(\bar{G})$ of  the complement graph $\bar{G}$ (Fig.~\ref{fig:graph}(c)), and therefore finding it is an NP-hard problem \cite{Karp,garey2002computers}.  Fig.~\ref{fig:graph}(b) shows  a minimum clique cover  of $G$ and the resulting index code of rate $3>2$, and therefore clique cover based index codes are not  necessarily optimal. 
 Nevertheless, it  is the basis of many greedy heuristics   in the literature \cite{ISCOD2006, RCA07}. 
 
 A lower bound on $L_{min}$ is the independence number $\alpha(G)$ which is  the maximum number of vertices with no edge between any two of them. To see this, consider the sub-problem formed by the users corresponding to an independent set of $G$ and their messages. In this sub-problem, users do not have any side information and therefore all the messages must be transmitted. We summarize the results above in the following Lemma. 
 
 \begin{lemma}
 $\alpha(G)\leq L_{min}\leq \chi(\bar{G})$.
 \end{lemma}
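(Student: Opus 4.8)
The plan is to prove the two inequalities separately, in each case by reasoning directly about scalar linear index codes.

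\textbf{Upper bound $L_{min}\le\chi(\bar G)$.} First I would record the elementary identity $\chi(\bar G)=\bar{\chi}(G)$: a proper coloring of $\bar G$ with $t$ colors is precisely a partition of the vertex set into $t$ color classes, each of which, being independent in $\bar G$, is a clique in $G$; conversely a clique cover of $G$ by $t$ cliques is a proper $t$-coloring of $\bar G$. So it suffices to exhibit an index code of length $\bar{\chi}(G)$. Fix an optimal clique cover $\{1,\dots,n\}=C_1\cup\cdots\cup C_t$ of $G$ with $t=\bar{\chi}(G)$ and broadcast the $t$ packets $Y_\ell=\sum_{j\in C_\ell}X_j$ (sum over $\mathbb{R}$, or XOR over $GF(2)$). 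For decoding: each vertex $i$ lies in a unique clique $C_{\ell(i)}$, and for every other $j\in C_{\ell(i)}$ the pair $\{i,j\}$ is an undirected edge of $G$, hence both arcs $(i,j),(j,i)$ lie in $G_d$, so in particular $u_i$ has $X_j$ cached; therefore $u_i$ recovers $X_i=Y_{\ell(i)}-\sum_{j\in C_{\ell(i)}\setminus\{i\}}X_j$. This is a valid scalar linear index code of length $t$, so $L_{min}\le t=\bar{\chi}(G)=\chi(\bar G)$.

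\textbf{Lower bound $\alpha(G)\le L_{min}$.} Let $S$ be a maximum independent set of $G$, $|S|=\alpha(G)$, and let $(Y_1,\dots,Y_L)$ be any valid scalar linear index code with $L=L_{min}$, say $Y_k=\sum_{j=1}^n a_{kj}X_j$. I would pass to the sub-instance on users $\{u_i:i\in S\}$ and messages $\{X_i:i\in S\}$ by substituting $X_j=0$ for $j\notin S$: the broadcast then becomes $L$ linear combinations of $\{X_i:i\in S\}$ from which each $u_i$, $i\in S$, still recovers $X_i$ using only cached messages with indices in $S$, so the optimal length cannot increase under this restriction. Since $S$ is independent in $G$, no user in the sub-instance has any usable side information about another message in $S$ (this is transparent when $G_d$ is undirected, the regime studied in the sequel; in general one gets the maximum acyclic induced subgraph, which coincides with $\alpha(G)$ in the undirected case). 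Hence each $u_i$ must recover $X_i$ from the broadcast alone, so the $L\times|S|$ matrix $(a_{ki})_{k\le L,\,i\in S}$ contains every standard basis vector $e_i$, $i\in S$, in its row span; its rank is thus $|S|$, forcing $L\ge|S|=\alpha(G)$.

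\textbf{Expected main obstacle.} The upper bound is routine once $\chi(\bar G)=\bar{\chi}(G)$ is stated. The step that needs care is the ``restriction'' in the lower bound: one must verify both that deleting the remaining users and messages does not increase $L_{min}$ (handled by the zeroing-out substitution above) and that an independent set of $G$ genuinely corresponds to a sub-instance with no exploitable side information, which rests on the convention that $G$ retains only the bidirectional arcs of $G_d$.
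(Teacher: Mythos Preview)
Your argument is correct and follows the same route as the paper: the upper bound via the clique-cover code (sum of packets within each clique, then decode using cached packets) and the lower bound via restricting to a maximum independent set and observing the sub-instance has no side information. In fact you are more careful than the paper on one point: the paper asserts flatly that on an independent set of $G$ ``users do not have any side information,'' which is literally true only when $G_d=G$; you correctly flag that for general directed $G_d$ the right lower bound is the maximum acyclic induced subgraph, which reduces to $\alpha(G)$ in the undirected setting treated thereafter.
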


\subsection{Index Coding \& Rank Minimization}\label{sec:minrk}

It was shown in \cite{BBJK06} that finding an optimal scalar linear index code is equivalent to minimizing the rank of a  certain matrix $M$. For instance, this matrix $M$ for the example in Fig.~\ref{fig:ex1} is given by 
\begin{equation*}
M=\begin{tabular}{m{3pt} m{8pt} m{5pt} m{3pt}m{3pt}}
 & $X_1$ & $X_2$ & $X_3$ & $X_4$ \\         
 $u_1 $ & \multicolumn{4}{c}{\multirow{4}{*}{$\begin{pmatrix}
 1 & * & * & 0\\ 
 * & 1 & * & 0\\ 
 0 & * & 1 & *\\ 
 * & 0 & 0 & 1
\end{pmatrix}$}} \\
$u_2 $& \\
$u_3 $&  \\
$u_4$ & \\
\end{tabular}.
\end{equation*}

The matrix $M$ is constructed by setting all the diagonal elements to $1$'s,  a star  in the $(i,j)^{th}$ position if edge $(i,j)$ exists in $G_d$, i.e., user $u_i$ caches packet $X_j$,  otherwise the entry is $0$. 
 The intuition is that the $i$th row of $M$ represents the linear coefficients of the coded packet that user $u_i$ will use to decode $X_i$. Hence,  the zero entries  enforce that this coded packet does not involve packets that   $u_i$ does not have as side information.  The packets that $u_i$ has as side information can always be subtracted out of the linear combination.
 The goal is to  choose values for the stars ``$*$" from a certain field $\mathbb{F}$ such that  the  rank of $M$ is minimized. 
 The saving in transmitted messages  can be  achieved by making the transmitter only broadcast  the coded packets that generate the row space of $M$.  It turns out that this formulation of index coding coincides with the minimum rank  of a graph $G$, $\minrk(G)$, defined by Haemers  \cite{shannon1956zero}.  Therefore, the optimal rate for a scalar linear index code is $L_{min}=\minrk(G)\leq \bar{\chi}(G_d)$\cite{BBJK06}.

 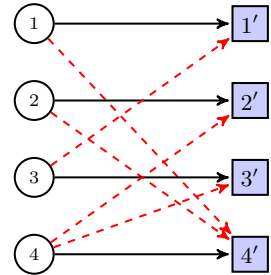
\begin{figure}[t]
 \begin{center}
\resizebox{0.2\textwidth}{!}{
     \begin{tikzpicture}[->,>=stealth',shorten >=1pt,auto,node distance=3cm,
  thick]
  \tikzstyle{sender} = [circle,draw,font=\sffamily\scriptsize \bfseries]
  \tikzstyle{receiver} = [rectangle,draw,fill=blue!20, font=\sffamily \small \bfseries]
  \node[sender] (1) {$1$};
  \node[receiver] (2) [right of  =1] {$1'$};
  \node[sender] (3) [below =0.5cm of 1] {$2$};
  \node[receiver] (4) [right  of =3] {$2'$};
  \node[sender] (5) [below =0.5cm of 3] {$3$};
  \node[receiver] (6) [right  of =5] {$3'$};
  \node[sender] (7) [below =0.5cm of 5] {$4$};
  \node[receiver] (8) [right  of =7] {$4'$};
   \path[every node/.style={}]
   (1)   edge (2) 
           edge [dashed, color=red]  (8) 
   (3)   edge  (4)
           edge [dashed, color=red] (8) 
   (5)   edge (6) 
           edge [dashed, color=red]  (2)
   (7)   edge (8) 
           edge [dashed, color=red] (4)
           edge [dashed, color=red] (6);
\end{tikzpicture}
}
 \vspace{-0.2cm}
\caption{ \footnotesize Interference management problem equivalent to the index coding instance in Fig.~\ref{fig:ex1} for the linear case. Circles represent transmitter nodes connected by black links to their intended receivers represented by squares. Dashed red links represent the interference between the different transmitters and receivers. }\label{fig:IM}
\end{center}
\vspace{-0.6cm}
\end{figure}
 
\subsection{Index Coding \& Topological Interference Management}\label{subsec:Interference}

It was shown in \cite{Syed} that, in the linear case, the  index coding problem is equivalent to the  topological interference management  problem in wireless networks.  The latter problem consists of  finding optimal  transmission schemes in interference networks with no channel state information at the transmitter.  This equivalence holds over any field, in particular the field of real numbers $\mathbb{R}$ on which we focus in this paper. 

We will briefly describe this equivalence using an example and  refer the interested reader to the results in \cite{Syed} and related literature \cite{jafar2013topological, sun2013index} for more details.  Fig.~\ref{fig:IM} depicts the wireless interference network that is equivalent  to the index coding problem in Fig.~\ref{fig:ex1}. Black solid links connect a transmitter $i$ with its intended receiver $i'$. Dashed red links indicate the set of receivers with which a transmitter node interferes when transmitting. For example, transmitter $1$ interferes with receiver $4'$. Transmitted signals are added ``in the air" and a receiver will receive the sum of his intended signal plus the interference on the red links. The intuition behind the connection to index coding can be explained as follows. Take for example receiver $1'$, it does not suffer of interference from $2$ and $4$ which is equivalent to  $1'$ getting interference from all the transmitters and $1'$ possessing the messages of $2$ and $4$ as side information, so it can cancel them out. One communication scheme for this network consists of letting each transmitter sends his message during a different time slot while the other transmitters are ``off". This corresponds to the trivial index coding solution of sending all the messages. The index code in Fig.~\ref{fig:ex1} gives a more efficient scheme for interference network in Fig.~\ref{fig:IM}:  nodes $\{1, 2, 3 \}$ transmit together in the first time slot, then   nodes $\{1, 4\}$ transmit together in the second time slot.

 \begin{figure}[t]
 \begin{center}
\begin{subfigure}{0.23\textwidth}
  \resizebox{1\textwidth}{!}{
      \begin{tikzpicture}
      \draw(2, 0) arc (10:38:10cm) ;
      \draw(4.8, 4.8) arc (120:176:6cm);
      \filldraw[black] (4,4.247) circle (1pt) node[anchor=west] {$d_1$};
      \draw[dashed](4,4.247) --(1.078,2.771);
      \filldraw[black] (1.078,2.771) circle (1pt) node[anchor=east] {$c_1$} node [below left  = 0.7 cm] {$\mathcal{C}$};
      \draw[dashed](1.078,2.771)--(2.372, 2.161);
      \filldraw[black] (2.372, 2.161) circle (1pt) node[anchor=west] {$d_2$} node [right  = 0.7 cm] {$\mathcal{D}$};
      \draw[dashed](2.372, 2.161)--(1.496, 1.827);
      \filldraw[black] (1.496, 1.827) circle (1pt) node[anchor=east] {$c_2$};
      \draw[dashed](1.496, 1.827) --(2.142, 1.599);
      \filldraw[black] (2.142, 1.599)circle (1pt) node[anchor=west] {$d_3$};
      \draw[dashed](2.142, 1.599)--(1.637, 1.43);
      \filldraw[black] (1.637, 1.43) circle (1pt) node[anchor=east] {$c_3$};
      \filldraw[black] (1.88,0.58)circle (1pt) node[anchor=west] {$d^*$};
\end{tikzpicture}
}
\vspace{-0.5cm}\caption{}
\end{subfigure}
\begin{subfigure}{0.23\textwidth}
  \resizebox{1\textwidth}{!}{
\begin{tikzpicture}
\fill[gray!40!white] (4.5, 0)--(2.8, 0)   -- (3.5,5) -- (5.2, 5);
\draw (4.5, 0)--(2.8, 0)   -- (3.5,5) -- (5.2, 5) ;
\draw (0, 4.5)-- (1, 4.46) arc (86:-45:1.2 cm).. controls (0.8, 1.2) and (5, 0.8) .. (2, 0) ;
 \filldraw[black] (3.36,4) circle (1pt) node[anchor=west] {$d_1$};
\draw [->, >=stealth]   (3.4 ,4)--(2.05, 3.65) node [ left ]{ {\bf Rank} $M \leq r$};
\draw [->, >=stealth]  (2.05, 3.65) -- (3.28, 3.46) ;
\draw [->, >=stealth]  (3.28, 3.46) -- (2.11,3.39) node [below left  = 0.7 and 1.3 cm] {$\mathcal{C}$};
\draw [->, >=stealth]  (2.11,3.39)  -- (3.25, 3.21)  node [below right  =0.7 cm] {$\mathcal{D}$};
\draw [->, >=stealth]  (3.25, 3.21) -- (2.11, 3.22) ;
\draw [->, >=stealth]  (2.11, 3.22) -- (3.23, 3.07);
\draw [->, >=stealth]  (3.23, 3.07) -- (2.11,3.14);
\draw [->, >=stealth]  (2.11,3.14) -- (3.22, 3) ;
\draw [->, >=stealth]  (3.22, 3) -- (2.11,3.15) ;
\end{tikzpicture}
}
\caption{}
\end{subfigure} 
\vspace{-0.3cm}
\caption{\footnotesize (a) Alternating Projections (AP) method between two convex sets. (b) AP method for the index coding problem (see Eqs.~\eqref{eq:regionC} and \eqref{eq:regionD}).}
\label{fig:AP}
\end{center}
\vspace{-.6cm}
\end{figure}
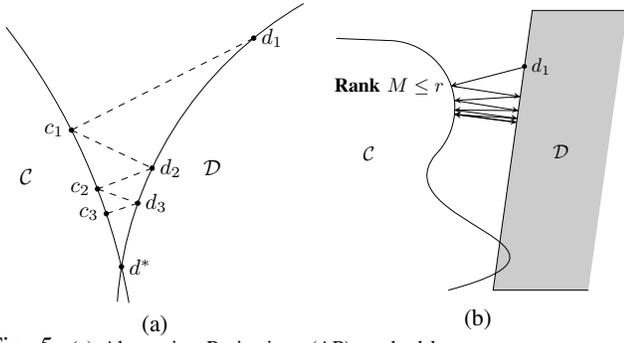

\section{Index Coding on Random Undirected Graphs}\label{sec:undir}

We start by considering undirected  side information graphs $G_d$  ($G_d=G$), i.e., for every directed edge $(i,j)$ in $G$ there is an edge $(j,i)$ in the opposite direction. 
Our approach is to use convex optimizing methods to find $L_{min}$ by minimizing the rank of the matrix $M$ over the reals.  The problem of rank minimization has been extensively studied in system theory \cite{fazel2001rank,  fazel2004rank}.  In \cite{recht2010guaranteed, candes2009exact}, it was shown that the convex relaxation that replaces the rank function by the nuclear norm (sum of singular values)  leads to finding the minimum rank with high probability under certain conditions on the matrix rank and the number of fixed  (observed) entries in the matrix. However, these results do not carry over directly to the index coding problem because the model there assumes the location of the fixed entries is chosen uniformly at random. In contrast, the  index coding matrix $M$ has a specific structure that dictates all the  the diagonal entries to be equal to one. Indeed,  the semi-definite program (SDP) relaxation in \cite{candes2009exact} always output the  maximum rank $n$ (instead of the minimum rank) which is obtained by setting all the ``*" entries in $M$ to zero making it  the identity matrix (see Appendix~\ref{App:Nuclear}).  Next, we will show that other rank minimization methods, such as the alternating projections (AP) method \cite{fazel2004rank, AlternatingBoyd2003}, can be used to construct near-optimal  scalar linear index codes.


\subsection {Alternating Projection Method}

Given two convex regions $\mathcal{C}$ and $\mathcal{D}$, a sequence of alternating projections  between these two regions converges to a point in their intersection as illustrated in Fig.~\ref{fig:AP}(a) \cite{Bre67, AlternatingBoyd2003, fazel2004rank}.
Therefore, completing the index coding matrix $M$ by choosing values for the ``*" such that $M$ has a low rank $r$ can be thought of as finding the intersection of two regions $\mathcal{C}$ and $\mathcal{D}$ in $\mathbb{R}^{n\times n}$, in which 
\begin{equation}\mathcal{C}=\{M\in \mathbb{R}^{n\times n};\rank(M)\leq r \},\label{eq:regionC}\end{equation}
is the set of matrices of rank less or equal to a given  rank $r$, and 
\begin{multline}
\mathcal{D}=\{M \in \mathbb{R}^{n\times n}; m_{ij}=0 \text{ if } (i,j)\notin G \ \text{ and } m_{ii}=1,\\ 
 i=1,\dots,n  \}.\label{eq:regionD}
\end{multline}
 Note that  $\mathcal{C}$ is not convex  and therefore convergence of the AP method is not guaranteed. However, the AP method can give a certificate, which is the completed matrix $M$, that a certain rank $r$ is achievable. Therefore, we will use the AP method as a    heuristic as described in algorithm~\ref{alg:AP}.

 \begin{algorithm}[b!]
 \SetAlgoRefName{APIndexCoding}
 \KwIn{Graph $G$ (or $G_d$)}
 \KwOut{Completed matrix $M^*$ with low rank $r^*$}
Set $r_k=$ greedy coloring number of $\bar{G}$\; \nllabel{Proj:1}
\While{$\exists M \in \mathcal{C'}$  such that $\rank M\leq r_k$\ \ }{
Randomly pick $ M_0 \in \mathcal{C'}$. Set $i=0$ and $r_k=r_k-1$\;
 \Repeat{$\| M_{i+1}- M_{i}\| \leq \epsilon$\nllabel{Proj:11}} 
{
 $i=i+1$\;
\tcc{Projection on $\mathcal{C'}$ (resp. $\mathcal{C}$)  via  eigenvalue decomposition (resp. SVD)} 
Find the eigenvalue decomposition $M_{i-1}=U\Sigma V^T,$ with $\Sigma =$diag$(\sigma_1, \dots, \sigma_n ), \ \sigma_1\geq \dots \geq \sigma_n$\;
Set  $\sigma_l=0$ if $\sigma_l<0,$ $l=1,\dots,n$\;
Compute ${M_i}=\sum_{j=1}^{r_k} \sigma_j u_j v_j^T$\;\nllabel{Proj:C}
 \tcc{Projection on $\mathcal{D}$ } 
$M_{i+1} ={M_i}$
Set the diagonal  entries of $M_{i+1}$ to $1$'s\; \nllabel{Proj:D1}
Change the $(a,b)^{th}$ position in $M_{i+1}$ to $0$ if edge $(a,b)$ does not exist in $G$\; \nllabel{Proj:D2}
} 
 }
\Return $M^*={M_i}$ and $r^*=r_k.$
 \caption{Alternating projections method for index coding.} \label{alg:AP}
\end{algorithm}

\noindent{\bf Algorithm~\ref{alg:AP}:}  The projection of  a matrix on  the region $\mathcal{C}$ is obtained by singular value decomposition (SVD) \cite{Eckart36}. We noticed from our simulations that a considerable improvement in  performance and    convergence rate,  (See Figs.~\ref{fig:SVDundirRmin} and \ref{fig:SVDundirTime} in Appendix~\ref{App:fig})  can be obtained by projecting on $\mathcal{C}'\subseteq \mathcal{C}$, the set of  positive semi-definite matrices of rank less or equal than $r$, 
\begin{equation}
\mathcal{C}'=\{M\in \mathbb{R}^{n\times n}; M\succeq 0 \text{ and } \rank(M)\leq r \}.\end{equation}

The projection on $\mathcal{C}'$ is obtained by   eigenvalue decomposition and taking the eigenvectors corresponding to the $r$ largest eigenvalues, as done in Step~\ref{Proj:C}. The Projection on region $\mathcal{D}$ is obtained by setting the diagonal entries of the matrix to $1$ and the $ab$th entry to $0$ if edge $(a,b)$ does not exist in $G$, as done in Step~\ref{Proj:D1} and~\ref{Proj:D2}.

%
%
Theoretically, the time complexity of the algorithm can be reduced by doing a binary search on $r$. However, we found that it is much faster to  start with $r$ equal to the  coloring number returned by the greedy coloring algorithm (Step~\ref{Proj:1}).
The stopping criteria in   Step~\ref{Proj:11} uses the $\ell^2$ norm, $\left \| \cdot \right \|$, which is equal to the largest singular value of the matrix. 

 \subsection{Simulation Results}
 
 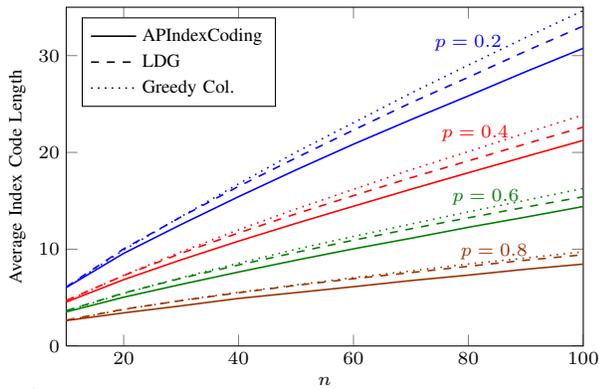
\begin{figure}[h!]  \centering \scriptsize
 \setlength\figureheight{5cm} 
\setlength\figurewidth{7.2cm}
%
%
%
\definecolor{mycolor1}{rgb}{0.00000,0.49804,0.00000}%
\definecolor{mycolor2}{rgb}{0.60000,0.20000,0.00000}%
\begin{tikzpicture}

\begin{axis}[%
width=\figurewidth,
height=\figureheight,
at={(0\figurewidth,0\figureheight)},
scale only axis,
xmin=0,
xmax=1,
ymin=0,
ymax=1,
hide axis,
axis x line*=bottom,
axis y line*=left
]
\node[below, right, inner sep=0mm, text=blue]
at (rel axis cs:0.716718490156295,0.88905750299951)(rel axis cs:0.0586745027124774,0.0474934036939302) {$p=0.2$};
\node[below, right, inner sep=0mm, text=red, draw=white]
at (rel axis cs:0.729567522961341,0.648675158780811)(rel axis cs:0.0586745027124774,0.0474934036939302) {$p=0.4$};
\node[below, right, inner sep=0mm, text=mycolor1, draw=white]
at (rel axis cs:0.748861679634149,0.478254393018619)(rel axis cs:0.0586745027124774,0.0474934036939302) {$p=0.6$};
\node[below, right, inner sep=0mm, text=mycolor2, draw=white]
at (rel axis cs:0.764399287901997,0.334026076018144)(rel axis cs:0.0586745027124774,0.0474934036939302) {$p=0.8$};
\end{axis}

\begin{axis}[%
width=0.9546875\figurewidth,
height=0.901408450704225\figureheight,
at={(0.03671875\figurewidth,0.0824949698189135\figureheight)},
scale only axis,
xmin=10,
xmax=100,
xlabel={$n$},
ymin=0,
ymax=35,
ylabel={Average Index Code Length},
legend style={at={(0.03,0.97)},anchor=north west,draw=black,fill=white,legend cell align=left}
]
\addplot [color=blue,solid,line width=0.6pt,forget plot]
  table[row sep=crcr]{%
10	6.033\\
20	9.567\\
30	12.551\\
40	15.425\\
50	18.193\\
60	20.849\\
70	23.38\\
80	25.837\\
90	28.341\\
100	30.758\\
};
\addplot [color=blue,dashed,line width=0.6pt,forget plot]
  table[row sep=crcr]{%
10	6.104\\
20	10.028\\
30	13.337\\
40	16.487\\
50	19.434\\
60	22.306\\
70	25.124\\
80	27.763\\
90	30.464\\
100	33.053\\
};
\addplot [color=blue,dotted,line width=0.6pt,forget plot]
  table[row sep=crcr]{%
10	6.061\\
20	9.912\\
30	13.368\\
40	16.753\\
50	20.022\\
60	23.066\\
70	26.113\\
80	29.022\\
90	31.878\\
100	34.648\\
};
\addplot [color=red,solid,line width=0.6pt,forget plot]
  table[row sep=crcr]{%
10	4.545\\
20	6.852\\
30	8.901\\
40	10.832\\
50	12.67\\
60	14.437\\
70	16.188\\
80	17.9\\
90	19.577\\
100	21.244\\
};
\addplot [color=red,dashed,line width=0.6pt,forget plot]
  table[row sep=crcr]{%
10	4.699\\
20	7.276\\
30	9.576\\
40	11.679\\
50	13.652\\
60	15.533\\
70	17.38\\
80	19.127\\
90	20.885\\
100	22.614\\
};
\addplot [color=red,dotted,line width=0.6pt,forget plot]
  table[row sep=crcr]{%
10	4.641\\
20	7.305\\
30	9.734\\
40	11.999\\
50	14.152\\
60	16.185\\
70	18.211\\
80	20.084\\
90	22.013\\
100	23.87\\
};
\addplot [color=mycolor1,solid,line width=0.6pt,forget plot]
  table[row sep=crcr]{%
10	3.52111111111111\\
20	5.03777777777778\\
30	6.36222222222222\\
40	7.64\\
50	8.86333333333333\\
60	10.0366666666667\\
70	11.1266666666667\\
80	12.26\\
90	13.3177777777778\\
100	14.4088888888889\\
};
\addplot [color=mycolor1,dashed,line width=0.6pt,forget plot]
  table[row sep=crcr]{%
10	3.64666666666667\\
20	5.47333333333333\\
30	6.99222222222222\\
40	8.36555555555556\\
50	9.67111111111111\\
60	10.9288888888889\\
70	12.1088888888889\\
80	13.2422222222222\\
90	14.3488888888889\\
100	15.4155555555556\\
};
\addplot [color=black!50!green,dotted,line width=0.6pt,forget plot]
  table[row sep=crcr]{%
10	3.63111111111111\\
20	5.42777777777778\\
30	7.05444444444444\\
40	8.55222222222222\\
50	9.99\\
60	11.3266666666667\\
70	12.5944444444444\\
80	13.86\\
90	15.0488888888889\\
100	16.2788888888889\\
};
\addplot [color=mycolor2,solid,line width=0.6pt,forget plot]
  table[row sep=crcr]{%
10	2.621\\
20	3.419\\
30	4.146\\
40	4.912\\
50	5.513\\
60	6.123\\
70	6.756\\
80	7.319\\
90	7.9284064665127\\
100	8.46\\
};
\addplot [color=mycolor2,dashed,line width=0.6pt,forget plot]
  table[row sep=crcr]{%
10	2.673\\
20	3.795\\
30	4.688\\
40	5.496\\
50	6.262\\
60	6.948\\
70	7.579\\
80	8.233\\
90	8.86374133949192\\
100	9.42333333333333\\
};
\addplot [color=mycolor2,dotted,line width=0.6pt,forget plot]
  table[row sep=crcr]{%
10	2.661\\
20	3.773\\
30	4.709\\
40	5.555\\
50	6.31\\
60	7.063\\
70	7.749\\
80	8.471\\
90	9.09006928406466\\
100	9.72666666666667\\
};
\addplot [color=black,solid,line width=0.6pt]
  table[row sep=crcr]{%
10	2.673\\
};
\addlegendentry{APIndexCoding};

\addplot [color=black,dashed,line width=0.6pt]
  table[row sep=crcr]{%
10	2.673\\
};
\addlegendentry{LDG};

\addplot [color=black,dotted,line width=0.6pt]
  table[row sep=crcr]{%
10	2.661\\
};
\addlegendentry{Greedy Col.};

\end{axis}
\end{tikzpicture}
\caption{\footnotesize Average index code length obtained by APIndexCoding, LDG and  Greedy Coloring on random undirected graphs $G(n,p)$.}\label{fig:APav}


\end{figure}

 We  tested the performance of algorithm~\ref{alg:AP}  on randomly generated graphs. We used the Erdos-Renyi model to generate random undirected graphs $G(n,p)$ on $n$ vertices where edges between two vertices are chosen iid with probability $p$. We compared the performance of algorithm ~\ref{alg:AP} to greedy coloring\footnote{We used the greedy coloring function in the mathgraph Matlab Library.} and Least Difference Greedy (LDG) (see Appendix~\ref{App:LDG} for details on LDG). We also tested the Alternating Minimization method (AltMin) \cite{fazel2004rank, jain2013low, Hardt2014} described in  Appendix~\ref{App:AM}. It does not perform as good as AP (see Fig.~\ref{fig:AMp02})  and suffers from a  slow convergence rate. 


 \begin{figure}[t!] \centering\scriptsize
 \setlength\figureheight{5cm} 
\setlength\figurewidth{8cm}
\input{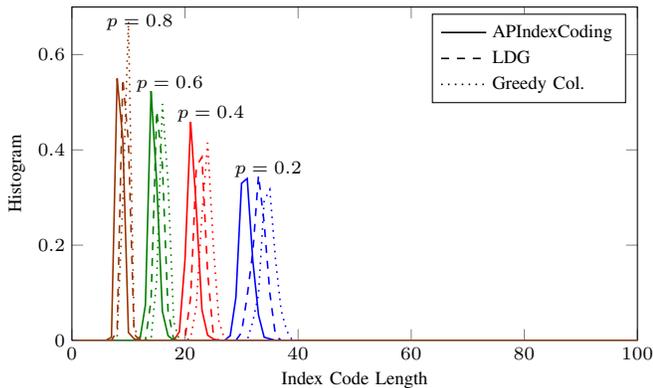} \vspace{-8pt}
\caption{ \footnotesize  Histogram of index code length  obtained by APIndexCoding, LDG and  Greedy Coloring on random undirected graphs $G(n,p)$ with $n=100$.}\label{fig:Hist}
\end{figure}  

  Fig.~\ref{fig:APav} shows the average rank obtained by the APIndexCoding algorithm for $n$ between $0$ and $100$ and different values of $p$. In all our simulations,   each data point is obtained by running the algorithms on $1000$ graph realizations and $\epsilon=0.001$ in the stopping criterion.
The  APIndexCoding algorithm always outperforms LDG and Greedy coloring. For instance, an improvement  of $13.6\%$ over greedy coloring is obtained for $n=30$ and $p=0.8$. 
Fig.~\ref{fig:Hist} shows the histogram of the distribution of the rank by APIndexCoding which suggests a concentration around the mean of ranks returned by APIndexCoding\footnote{The concentration of the minimum rank of $G(n,p)$ around its average can be proven using  the vertex exposition martingale method  \cite{AloSpe92}. However, finding an expression of the average remains an open problem \cite{haviv2012linear}.}.  Fig.~\ref{fig:APPer1} shows the savings  achieved by APIndexCoding over  linear network codes which allow all users to decode all the messages (multicast)\footnote{Linear network codes can achieve multicast by transmitting $n-\min_i|H_i|.$}. Similarly, Figs.~\ref{fig:APPer2}, \ref{fig:APPer3} and \ref{fig:APPer4} in Appendix~\ref{App:fig} show the percentage savings of APIndexCoding over  uncoded transmissions, greedy coloring and LDG.

{\em Lower bounds:} We tested the APIndexCoding algorithm on all non-homomorphic directed graphs on at most $5$ vertices and compared its performance to the optimal rates  reported in \cite{Kimurl}. APIndexCoding was always able to find the optimal index coding length except for when it is not an integer ($28$ graphs on $n=5$ vertices). Moreover, we tested APIndexCoding on random 3-colorable graphs (3-partite graphs). For these graphs, we know a priori that the matrix $M$ could be completed to have rank $3$ or less. Fig.~\ref{fig:3col} shows that APIndexCoding beats greedy coloring and LDG and gives an average rank very close to $3$.

 \begin{figure}[t!]  \centering \scriptsize
 \setlength\figureheight{4.8cm} 
\setlength\figurewidth{6.8cm}
%
%
%
\definecolor{mycolor1}{rgb}{0.00000,0.49804,0.00000}%
\definecolor{mycolor2}{rgb}{0.60000,0.20000,0.00000}%
\begin{tikzpicture}

\begin{axis}[%
width=\figurewidth,
height=0.954033314212116\figureheight,
at={(0\figurewidth,0\figureheight)},
scale only axis,
xmin=10,
xmax=100,
xlabel={$n$},
ymin=35,
ymax=75,
ylabel={APIndexCoding Savings over Multicast in \%},
legend style={at={(0.97,0.03)},anchor=south east,draw=black,fill=white,legend cell align=left}
]
\addplot [color=blue,dotted,line width=0.8pt,mark size=1.8pt,mark=square,mark options={solid}]
  table[row sep=crcr]{%
10	37.8156630728318\\
20	49.7090410181199\\
30	55.3438957656577\\
40	58.3594289910159\\
50	60.3811838388854\\
60	61.8975735722784\\
70	63.1571399981405\\
80	64.1959363766626\\
90	64.9530890029048\\
100	65.6385489945047\\
};
\addlegendentry{p=0.2};

\addplot [color=red,dash pattern=on 1pt off 3pt on 3pt off 3pt,line width=0.8pt,mark=triangle,mark options={solid}]
  table[row sep=crcr]{%
10	46.1199231808806\\
20	57.5306805503905\\
30	62.1551295297134\\
40	64.7517458168731\\
50	66.5153204962181\\
60	67.8168657724554\\
70	68.733039103016\\
80	69.5011824681211\\
90	70.1192513099634\\
100	70.6182152128603\\
};
\addlegendentry{p=0.4};

\addplot [color=mycolor1,dashed,line width=0.8pt,mark size=2.5pt,mark=x,mark options={solid}]
  table[row sep=crcr]{%
10	47.6803698200429\\
20	59.7502634342595\\
30	64.5363056938243\\
40	67.0607306999163\\
50	68.6789194639489\\
60	69.8820187349648\\
70	70.957598790277\\
80	71.6424918524207\\
90	72.2965311263011\\
100	72.7602225310013\\
};
\addlegendentry{p=0.6};

\addplot [color=mycolor2,solid,line width=0.8pt,mark=o,mark options={solid}]
  table[row sep=crcr]{%
10	43.1613644742264\\
20	58.0933003211335\\
30	63.508981930521\\
40	65.931947122387\\
50	68.2867480829963\\
60	69.7584827381834\\
70	70.6703364923355\\
80	71.6580377092538\\
90	72.1748791258675\\
100	72.873834472675\\
};
\addlegendentry{p=0.8};

\end{axis}
\end{tikzpicture}
\caption{\footnotesize  Savings in  percentage of APIndexCoding over  multicast network codes.}\label{fig:APPer1}
\vspace{-0.2cm}
\end{figure}
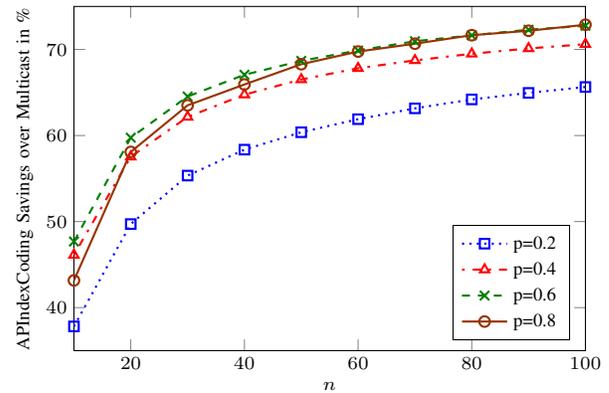

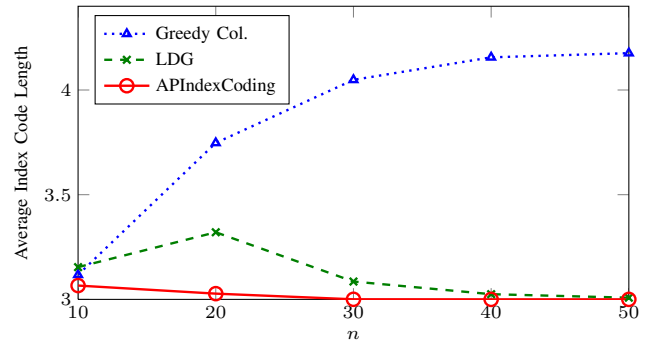
\begin{figure}[t!]
\centering \scriptsize
 \setlength\figureheight{3.9cm} 
\setlength\figurewidth{7.7cm}
%
%
%
\definecolor{mycolor1}{rgb}{0.00000,0.49804,0.00000}%
\begin{tikzpicture}

\begin{axis}[%
width=0.950920245398773\figurewidth,
height=\figureheight,
at={(0\figurewidth,0\figureheight)},
scale only axis,
xmin=10,
xmax=50,
xlabel={$n$},
ymin=3,
ymax=4.4,
ylabel={Average Index Code Length},
legend style={at={(0.03,0.97)},anchor=north west,draw=black,fill=white,legend cell align=left}
]
\addplot [color=blue,dotted,line width=0.9pt,mark size=1.7pt,mark=triangle,mark options={solid}]
  table[row sep=crcr]{%
10	3.1174\\
20	3.7472\\
30	4.0484\\
40	4.156\\
50	4.176\\
};
\addlegendentry{Greedy Col.};

\addplot [color=mycolor1,dashed,line width=0.9pt,mark=x,mark options={solid}]
  table[row sep=crcr]{%
10	3.154\\
20	3.3206\\
30	3.0852\\
40	3.0248\\
50	3.008\\
};
\addlegendentry{LDG};

\addplot [color=red,solid,line width=0.9pt,mark size=2.5pt,mark=o,mark options={solid}]
  table[row sep=crcr]{%
10	3.066\\
20	3.0274\\
30	3.0004\\
40	3\\
50	3\\
};
\addlegendentry{APIndexCoding};

\end{axis}
\end{tikzpicture}
\caption{\footnotesize Average index code length obtained by using Greedy Coloring, LDG and \ref{alg:AP} for random 3-colorable graphs when $p=0.5$.}
 \label{fig:3col}
\vspace{-0.2cm}
\end{figure}



\subsection{Convergence Rate and Running Time}

We ran the simulations on a DELL XPS i7 - 16GB Memory Desktop using Matlab software. Figs.~\ref{fig:undirectedTime} and \ref{fig:undirectedIt}   depict respectively    the average time and average number of iterations taken by the APIndexCoding algorithm to converge on a random undirected graph $G(n,p)$. We notice that the time complexity of the algorithm roughly increases exponentially as $n$ increases (and  $p$ constant)  and as  $p$ increases (and $n$ constant).

To speed up the converge time, we tested a variant of the AP method, called Directional Alternating Projections (DirAP) \cite{Control1987} which is  described in Appendix~\ref{App:DAP}. DirAP can lead to considerable savings in time as seen in Fig.~\ref{fig:TimeDir} ($60\%$  for $n=10$ and $85\%$ for $n=140$, both for  $p=0.2$). We should mention  that Greedy coloring and LDG have complexity quadratic in $n$ and are therefore   much faster than Directional APIndexCoding as seen in Fig.~\ref{fig:TimeDir}. 
However,  the savings in transmissions induced by DirAP or APIndexCoding may justify their  computation overhead in scenarios where the computation can be done offline or  can be amortized over a long time such as finding codes for interference networks with   static or slowly changing topologies.

 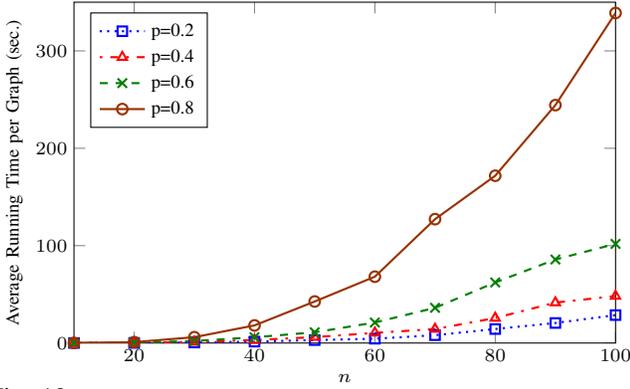
\begin{figure}[h!] \centering \scriptsize
 \setlength\figureheight{4.7cm} 
\setlength\figurewidth{7.2cm}
%
%
%
\definecolor{mycolor1}{rgb}{0.00000,0.49804,0.00000}%
\definecolor{mycolor2}{rgb}{0.60000,0.20000,0.00000}%
\begin{tikzpicture}

\begin{axis}[%
width=\figurewidth,
height=0.964807817760712\figureheight,
at={(0\figurewidth,0\figureheight)},
scale only axis,
xmin=10,
xmax=100,
xlabel={$n$},
ymin=0,
ymax=350,
ylabel={Average Running Time per Graph (sec.)},
legend style={at={(0.03,0.97)},anchor=north west,draw=black,fill=white,legend cell align=left}
]
\addplot [color=blue,dotted,line width=0.8pt,mark size=1.8pt,mark=square,mark options={solid}]
  table[row sep=crcr]{%
10	0.0443161864575908\\
20	0.118643087954383\\
30	0.530772029935852\\
40	1.46724132031361\\
50	3.03131674354954\\
60	4.4479933274412\\
70	8.0478871147541\\
80	14.3840262637206\\
90	20.5869972399144\\
100	28.7048251840342\\
};
\addlegendentry{p=0.2};

\addplot [color=red,dash pattern=on 1pt off 3pt on 3pt off 3pt,line width=0.8pt,mark=triangle,mark options={solid}]
  table[row sep=crcr]{%
10	0.0407006812373163\\
20	0.281260036086878\\
30	1.08152826589527\\
40	3.04615776907712\\
50	6.07865285939204\\
60	10.4249719943232\\
70	14.3670755924717\\
80	25.6825413930128\\
90	41.5176082836806\\
100	48.2423004597705\\
};
\addlegendentry{p=0.4};

\addplot [color=mycolor1,dashed,line width=0.8pt,mark size=2.5pt,mark=x,mark options={solid}]
  table[row sep=crcr]{%
10	0.0503061391186886\\
20	0.500579941710686\\
30	2.09021897440951\\
40	5.9375591245826\\
50	11.0361402502247\\
60	21.0819453560564\\
70	36.1629764939848\\
80	62.2334917715119\\
90	85.7915666636794\\
100	101.811917298586\\
};
\addlegendentry{p=0.6};

\addplot [color=mycolor2,solid,line width=0.8pt,mark=o,mark options={solid}]
  table[row sep=crcr]{%
10	0.0690612781064212\\
20	0.917043740058851\\
30	6.0233697801037\\
40	18.1001825114461\\
50	42.6577706847893\\
60	68.1379837389241\\
70	127.196991409556\\
80	171.803308887636\\
90	244.334069318762\\
100	338.933046956685\\
};
\addlegendentry{p=0.8};

\end{axis}
\end{tikzpicture}%
\vspace{-8pt}
\caption{\footnotesize Average running time of one Graph by using APIndexCoding on random undirected graphs.}
\label{fig:undirectedTime}
 \end{figure}

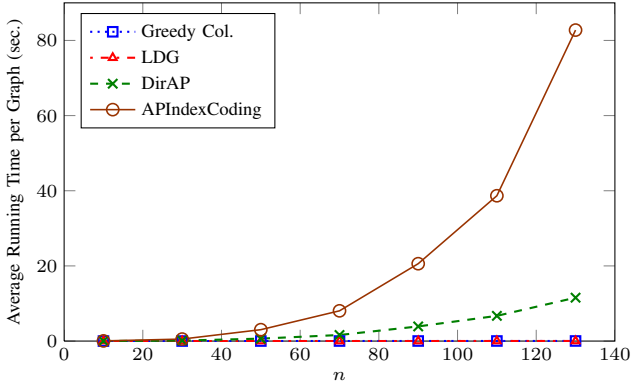
\begin{figure}[h!]  \centering \scriptsize
%
%
%
\definecolor{mycolor1}{rgb}{0.00000,0.49804,0.00000}%
\definecolor{mycolor2}{rgb}{0.60000,0.20000,0.00000}%
\begin{tikzpicture}

\begin{axis}[%
width=0.950920245398773\figurewidth,
height=\figureheight,
at={(0\figurewidth,0\figureheight)},
scale only axis,
xmin=0,
xmax=140,
xlabel={$n$},
ymin=0,
ymax=90,
ylabel={Average Running Time per Graph (sec.)},
legend style={at={(0.03,0.97)},anchor=north west,draw=black,fill=white,legend cell align=left}
]
\addplot [color=blue,dotted,line width=0.8pt,mark size=1.8pt,mark=square,mark options={solid}]
  table[row sep=crcr]{%
10	0.00258732484434171\\
30	0.00633730182239717\\
50	0.00970344915771043\\
70	0.0132767487483332\\
90	0.0167368064015546\\
110	0.0206215868577341\\
130	0.0245577891292516\\
};
\addlegendentry{Greedy Col.};

\addplot [color=red,dash pattern=on 1pt off 3pt on 3pt off 3pt,line width=0.8pt,mark size=1.7pt,mark=triangle,mark options={solid}]
  table[row sep=crcr]{%
10	0.000246392368631654\\
30	0.000965853851849298\\
50	0.00233268591538815\\
70	0.00477411654891502\\
90	0.008585507350914\\
110	0.0142075226021327\\
130	0.0219161833578853\\
};
\addlegendentry{LDG};

\addplot [color=mycolor1,dashed,line width=0.8pt,mark size=2.5pt,mark=x,mark options={solid}]
  table[row sep=crcr]{%
10	0.017067617033\\
30	0.125937311559\\
50	0.649451381534999\\
70	1.615938128329\\
90	3.895394904101\\
110	6.70871208363199\\
130	11.548104417797\\
};
\addlegendentry{DirAP};

\addplot [color=mycolor2,solid,line width=0.6pt,mark size=2.3pt,mark=o,mark options={solid}]
  table[row sep=crcr]{%
10	0.0443161864575908\\
30	0.530772029935852\\
50	3.03131674354954\\
70	8.0478871147541\\
90	20.5869972399144\\
110	38.6683249000713\\
130	82.7530849753386\\
};
\addlegendentry{APIndexCoding};

\end{axis}
\end{tikzpicture}%
\vspace{-5pt}
\caption{\footnotesize Running time of APIndexCoding and Directional APIndexCoding (DirAP) on random undirected graphs $G(n,p)$ with $p=0.2$.}\label{fig:TimeDir}
 \vspace{-.1cm}
 \end{figure}

\subsection{Decoding Error Analysis}

The \ref{alg:AP} algorithm  returns a completed  matrix $M^*$ with low rank $r^*$. However,  $M^*$  is not in $\mathcal{C}$ in general, but is very ``close" to a matrix in  $\mathcal{C}$ (in $\ell_2$ norm distance) as dictated by   to the stopping criteria of the algorithm.  This will  cause a small decoding error at the users side.

\begin{example}\label{ex:decoding}
For the index coding instance of Fig.~\ref{fig:ex1}, our implementation of  algorithm~\ref{alg:AP} with $\epsilon=0.001$ returns  following matrix $M^*$ with rank $2$,

\begin{equation}
M^*=\begin{bmatrix*}   
    1.0000  &  1.4492  &  1.8671 &  1\cdot 10^{-5} \\
    0.6900  &  1.0000 &   1.2883  & -1 \cdot 10^{-5} \\
 9 \cdot 10^{-6}  &  0.7762   & 1.0000  & -0.7519  \\
    0.7122   &  1 \cdot 10^{-5}    & -1 \cdot 10^{-5}  &  1.0000  \\
\end{bmatrix*}.\label{eq:matrixsimu}
\end{equation}
   It can be seen that $M^*$ is not in $\mathcal{C}$ since that the positions that are supposed to be zero are not exactly $0$ but relatively  small numbers. 
\end{example}

The next result shows that  if the quantization interval  of the messages $X_i$'s is not very small,  the decoding error can be avoided. Assume $X_i\in[-X_{max},X_{max}], \  i=1, \dots,n$, and let $\hat{X}_i$ be the decoded message $X_i$. Lemma~\ref{lemma:error} upper bounds the decoding error as a function of $\epsilon$, where $\epsilon$ is the distance of the matrix $M^*$ to $\mathcal{C}$ and is used as a   stopping criteria in \ref{alg:AP}.

\begin{lemma}\label{lemma:error}
Let $\mathbf{X}=[X_1,X_2, \dots, X_n]^T$ be the message vector at the transmitter. Assume that the index code  given by matrix $M^*$ is used  and let $ \hat{\mathbf{X}}=[\hat{X}_1,\hat{X}_2, \dots, \hat{X}_n]^T$ be the messages decoded by the users. Then, 

\begin{equation}
\| \mathbf{X} - \hat{\mathbf{X}}\| \leq \epsilon X_{\max}\sqrt{n}.
 \end{equation}
\end{lemma}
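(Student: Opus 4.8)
The plan is to trace through exactly how a user decodes and bound the error introduced by the fact that $M^*$ is only approximately in $\mathcal{C}$. Recall the setup: the transmitter wants to broadcast vectors spanning the row space of a low-rank completion of $M$. Write $M^* = U W$ where $U \in \mathbb{R}^{n \times r^*}$ and $W \in \mathbb{R}^{r^* \times n}$ (a rank-$r^*$ factorization obtained from the eigenvalue/SVD step of the algorithm); the transmitter broadcasts the $r^*$ coded packets $\mathbf{S} = W\mathbf{X}$. User $u_i$ knows row $i$ of $U$, call it $\mathbf{u}_i^T$, together with its side information. It forms $\mathbf{u}_i^T \mathbf{S} = \mathbf{u}_i^T W \mathbf{X} = (M^*)_i \mathbf{X} = \sum_j m^*_{ij} X_j$. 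If $M^*$ were exactly in $\mathcal{D}$, this equals $X_i + \sum_{j : (i,j)\in G} m^*_{ij} X_j$, and since $u_i$ has every such $X_j$ as side information it subtracts them off and recovers $X_i$ exactly. The point is that $M^*$ is \emph{not} exactly in $\mathcal{D}$: the diagonal is exactly $1$ and the entries forced to $0$ are only small (of size at most roughly $\epsilon$), because Step~\ref{Proj:D1}–\ref{Proj:D2} are applied but the final iterate is the projection onto $\mathcal{C}'$, so $M^* = M_i$ differs from its $\mathcal{D}$-projection by at most $\epsilon$ in $\ell^2$.

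First I would make precise that $\|M^* - \Pi_{\mathcal D}(M^*)\| \le \epsilon$, where $\Pi_{\mathcal D}$ zeroes out the off-graph entries and resets the diagonal to $1$; this is exactly the stopping criterion $\|M_{i+1} - M_i\| \le \epsilon$ since $M_{i+1} = \Pi_{\mathcal D}(M_i) = \Pi_{\mathcal D}(M^*)$. Equivalently, collecting the ``error entries'' (the off-graph positions that should be $0$) into a matrix $E := M^* - \Pi_{\mathcal D}(M^*)$, we have the operator-norm bound $\|E\| \le \epsilon$. Next I would write the decoding error for user $i$: after subtracting the available side-information terms, $\hat X_i - X_i = \sum_{j:(i,j)\notin G,\, j\ne i} m^*_{ij} X_j = (E\mathbf{X})_i$. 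Stacking over $i$, $\hat{\mathbf X} - \mathbf X = E\mathbf X$, so $\|\mathbf X - \hat{\mathbf X}\| = \|E\mathbf X\| \le \|E\|\,\|\mathbf X\| \le \epsilon \|\mathbf X\|$. Finally, since each $X_i \in [-X_{\max}, X_{\max}]$ we have $\|\mathbf X\| \le X_{\max}\sqrt n$, giving $\|\mathbf X - \hat{\mathbf X}\| \le \epsilon X_{\max}\sqrt n$, as claimed.

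The one place that needs care — and what I expect to be the main obstacle — is the claim $\|E\| \le \epsilon$. The stopping criterion controls $\|M_{i+1} - M_i\|$, and $M_{i+1}$ is obtained from $M_i$ by \emph{two} operations: resetting the diagonal to $1$ and zeroing the off-graph entries. So $M_{i+1} - M_i$ contains not only the negatives of the off-graph entries of $M_i$ (i.e.\ $-E$ up to sign) but also the diagonal corrections $1 - (M_i)_{\ell\ell}$. Thus $E$ is a submatrix (a subset of entries) of $M_i - M_{i+1}$; one must argue that passing to this submatrix does not increase the $\ell^2$ (spectral) norm. This follows because zeroing out a set of entries of a matrix that are disjoint in support from the rest — here the off-diagonal error entries versus the diagonal corrections occupy disjoint positions — and more simply because restricting a matrix to a subset of its entries (setting the others to zero) can only decrease the spectral norm when the kept and discarded entries lie in ``rectangular'' blocks; in general one invokes $\|E\| \le \|M_i - M_{i+1}\| \le \epsilon$ by a pinching/entry-selection argument. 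Alternatively, and more robustly, one defines $\epsilon$ directly as $\|M^* - \Pi_{\mathcal D}(M^*)\|$ (the genuine distance of $M^*$ to $\mathcal D$), which is $\le$ the stopping threshold, and then the bound $\|E\| = \|M^* - \Pi_{\mathcal D}(M^*)\| \le \epsilon$ is immediate by definition. I would present the argument this second way to keep it clean, noting that Example~\ref{ex:decoding} already exhibits exactly such an $M^*$ with small off-graph entries and exact diagonal.
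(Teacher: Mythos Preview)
Your argument is correct and follows the same route as the paper: write $\hat{\mathbf X}-\mathbf X = (M^*-\Pi_{\mathcal D}(M^*))\mathbf X$, bound the operator norm of the error matrix by the stopping threshold $\epsilon$, and bound $\|\mathbf X\|\le X_{\max}\sqrt n$. The paper phrases the first step as $I+M^*\circ\Phi-M^* = M_{\mathcal D}-M^*$ (with $\Phi$ the $0/1$ side-information mask) and then passes through an SVD representation of $M_{\mathcal D}-M^*$, but the substance is identical to yours.

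One clarification: your worry about needing a pinching/submatrix argument is unnecessary, and stems from assuming the diagonal of $M^*$ is exactly $1$. It is not in general, since $M^*=M_i$ is the projection onto $\mathcal C'$ (the low-rank set), not onto $\mathcal D$. The decoding rule in the paper does not divide by $m^*_{ii}$, so the term $(m^*_{ii}-1)X_i$ is also part of the decoding error; consequently $E=M^*-\Pi_{\mathcal D}(M^*)$ equals $M_i-M_{i+1}$ \emph{exactly}, and $\|E\|\le\epsilon$ is precisely the stopping criterion, with nothing further to argue. Your ``alternative, more robust'' presentation is therefore the right one and needs no caveat.
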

\begin{proof}
See Appendix~\ref{App:ERROR}.
\end{proof}

  To illustrate the result in Lemma~\ref{lemma:error}, we first elaborate on  the encoding and decoding functions of the index code once $M^*$ is obtained from the algorithms.   
  
 Let $r^*$ be the rank of $M^*$ and  Let   $A$ be  a $r^*\times n$ submatrix of $M^*$  of rank $r^*$.  WLOG, we  can assume that $A$ is formed of the  first $r^*$ rows of $M^*$. 
Let $\underbar{m}_i^{*}$ denotes the $i$th row of $M^*$, with $i=1, \dots, n$.
 

  The transmitter  broadcasts 
 \begin{equation}
 \mathbf{Y}=\begin{bmatrix*}  Y_1 , Y_2 , \dots , Y_{r^*} \end{bmatrix*}^T=A \mathbf{X}.
 \end{equation}

When decoding, user $i$ can obtain  $\hat{X_i}$ by the following decoding equation: 
\begin{equation}
\hat{X_i}=\left\{\begin{matrix}
Y_i - \underbar{m}_i^{*} \phi_{i}^T  & 1\leq i \leq r^*, \\
\underbar{m}_i^{*} A^\dagger \mathcal{Y}- \underbar{m}_i^{*} \phi_{i}^T &  r^*<  i \leq n,
\end{matrix}\right. \label{eq:decode} 
\end{equation}
where $A^\dagger=A^T(AA^T)^{-1}$  is the Moore-Penrose pseudoinverse of $A$ and  $\phi_{i}$ is a dimension $n$ vector that contains all  the side information that user $i$ has, with $0$'s on all the other positions. For instance, in the example of Fig.~\ref{fig:ex1}, $\phi_1 =\begin{bmatrix*}  0 , X_2 , X_3 , 0 \end{bmatrix*}.$

 {\em Example 2 (continued):}  Suppose the  transmitter wants to send $\mathbf{X}=\begin{bmatrix*} 10,10 ,-10, 10 \end{bmatrix*}^T$ to the users. Let $$A=\begin{bmatrix*}   
    1.0000  &  1.4492  &  1.8671 &  1\cdot 10^{-5} \\
 9 \cdot 10^{-6}  &  0.7762   & 1.0000  & -0.7519 
\end{bmatrix*}$$ be a submatrix of $M^*$ in \eqref{eq:matrixsimu} of rank $2$. Then, the transmitter should broadcast
 $\mathbf{Y} = A \mathbf{X}= \begin{bmatrix*}  5.8211 , -9.7575  \end{bmatrix*}.$
The decoding vector given by \eqref{eq:decode}  is 
     $$\hat{\mathbf{X}}= \begin{bmatrix*} 9.999, 9.9998 , -9.9997, 10.0002 \end{bmatrix*}^T .$$
The aggregate decoding error here is  $\| \mathbf{X}-\hat{\mathbf{X}} \| =3.6894 \cdot 10^{-4}$. This should be compared to the   bound from Lemma~\ref{lemma:error} which gives $\| \mathbf{X}-\hat{\mathbf{X}} \| =2 \cdot 10^{-2}.$

In general, it would be interesting to  bound the decoding error per user. However, we found it more tractable to bound the aggregate decoding error. The bound on the decoding error in Lemma~\ref{lemma:error} is  loose, but can give guidelines on how the stopping criteria affects the decoding error and can help design the quantization of the source if zero-decoding error is required. Fig.~\ref{fig:EError} shows the gap between  the theoretical bound of Lemma~\ref{lemma:error} and the average error obtained in simulations. 

\begin{figure}[h] \centering \scriptsize
 \setlength\figureheight{4.4cm} 
\setlength\figurewidth{7.2cm}
%
%
\begin{tikzpicture}

\begin{axis}[%
width=0.951322751322751\figurewidth,
height=\figureheight,
at={(0\figurewidth,0\figureheight)},
scale only axis,
xmin=10,
xmax=80,
xlabel={$n$},
ymode=log,
ymin=1e-05,
ymax=0.1,
yminorticks=true,
ylabel={Decoding Error},
legend style={at={(0.97,0.5)},anchor=east,draw=black,fill=white,legend cell align=left}
]
\addplot [color=blue,solid,line width=0.8pt,mark=triangle,mark options={solid}]
  table[row sep=crcr]{%
10	0.000205547265658617\\
20	0.000119904274025572\\
30	0.000101480379310953\\
40	9.6348707719209e-05\\
50	8.99581288678739e-05\\
60	8.4774115316734e-05\\
70	7.98628529292968e-05\\
80	7.16645482849152e-05\\
};
\addlegendentry{APIndexCoding Decoding Error};

\addplot [color=red,dash pattern=on 1pt off 3pt on 3pt off 3pt,line width=0.8pt,mark size=1.8pt,mark=square,mark options={solid}]
  table[row sep=crcr]{%
10	0.0316227766016838\\
20	0.0447213595499958\\
30	0.0547722557505166\\
40	0.0632455532033676\\
50	0.0707106781186548\\
60	0.0774596669241483\\
70	0.0836660026534076\\
80	0.0894427190999916\\
};
\addlegendentry{Theoretic Bound in Lemma 3};

\end{axis}
\end{tikzpicture}
\caption{\footnotesize Average decoding error  $\| \mathbf{X}-\hat{\mathbf{X}} \|$ in  APIndexcoding on random undirected graphs when $p=0.2$, $\epsilon=0.001$ and $X_i\in[-10,10]$ ($X_{\max}=10$).}\label{fig:EError}
 \vspace{-.2cm}
 \end{figure}
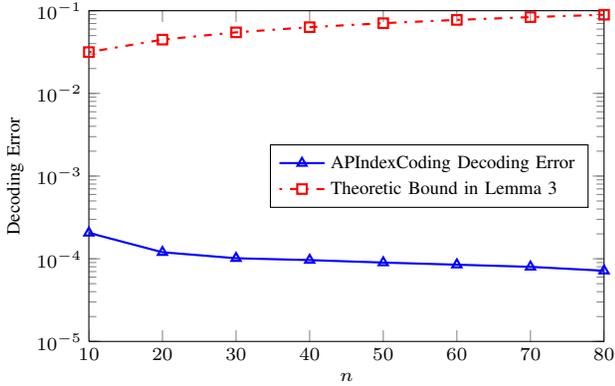

\section{Index Coding on Random Directed Graphs}\label{sec:dir}
 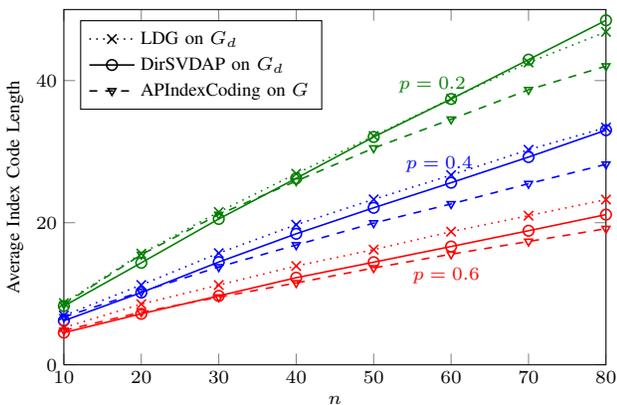
\begin{figure}[h] \centering\scriptsize
 \setlength\figureheight{5.8cm} 
\setlength\figurewidth{9.3cm}
%
%
%
\definecolor{mycolor1}{rgb}{0.00000,0.49804,0.00000}%
\begin{tikzpicture}

\begin{axis}[%
width=\figurewidth,
height=\figureheight,
at={(0\figurewidth,0\figureheight)},
scale only axis,
xmin=0,
xmax=1,
ymin=0,
ymax=1,
hide axis,
axis x line*=bottom,
axis y line*=left
]
\node[below, right, inner sep=0mm, text=red]
at (rel axis cs:0.628580563438227,0.318231390882494)(rel axis cs:0.0511234036798173,0.0438548812664901) {$p=0.6$};
\node[below, right, inner sep=0mm, text=mycolor1]
at (rel axis cs:0.608670621444617,0.756208017452157)(rel axis cs:0.0511234036798173,0.0438548812664901) {$p=0.2$};
\node[below, right, inner sep=0mm, text=blue]
at (rel axis cs:0.618353655064814,0.574514334251058)(rel axis cs:0.0511234036798173,0.0438548812664901) {$p=0.4$};
\end{axis}

\begin{axis}[%
width=0.775\figurewidth,
height=0.815\figureheight,
at={(0.13\figurewidth,0.113384094754653\figureheight)},
scale only axis,
xmin=10,
xmax=80,
xlabel={$n$},
ymin=0,
ymax=50,
ylabel={Average Index Code Length},
legend style={at={(0.03,0.97)},anchor=north west,draw=black,fill=white,legend cell align=left}
]
\addplot [color=mycolor1,dotted,line width=0.6pt,mark size=2.5pt,mark=x,mark options={solid},forget plot]
  table[row sep=crcr]{%
10	8.69\\
20	15.65\\
30	21.51\\
40	26.9\\
50	32.2\\
60	37.45\\
70	42.5409836065574\\
80	46.8571428571429\\
};
\addplot [color=mycolor1,solid,line width=0.6pt,mark size=2.0pt,mark=o,mark options={solid},forget plot]
  table[row sep=crcr]{%
10	8.27\\
20	14.36\\
30	20.53\\
40	26.25\\
50	32.07\\
60	37.41\\
70	42.9508196721311\\
80	48.4821428571429\\
};
\addplot [color=mycolor1,dashed,line width=0.6pt,mark size=1.7pt,mark=triangle,mark options={solid,rotate=180},forget plot]
  table[row sep=crcr]{%
10	8.53\\
20	15.51\\
30	21.15\\
40	25.86\\
50	30.5\\
60	34.55\\
70	38.7049180327869\\
80	42.0357142857143\\
};
\addplot [color=blue,dotted,line width=0.6pt,mark size=2.5pt,mark=x,mark options={solid},forget plot]
  table[row sep=crcr]{%
10	6.93\\
20	11.24\\
30	15.72\\
40	19.72\\
50	23.29\\
60	26.7\\
70	30.27\\
80	33.390243902439\\
};
\addplot [color=blue,solid,line width=0.6pt,mark size=2.0pt,mark=o,mark options={solid},forget plot]
  table[row sep=crcr]{%
10	6.21\\
20	10.17\\
30	14.46\\
40	18.44\\
50	22.1\\
60	25.63\\
70	29.24\\
80	33.0243902439024\\
};
\addplot [color=blue,dashed,line width=0.6pt,mark size=1.7pt,mark=triangle,mark options={solid,rotate=180},forget plot]
  table[row sep=crcr]{%
10	6.73\\
20	10.29\\
30	13.77\\
40	16.86\\
50	19.95\\
60	22.64\\
70	25.49\\
80	28.219512195122\\
};
\addplot [color=red,dotted,line width=0.6pt,mark size=2.5pt,mark=x,mark options={solid},forget plot]
  table[row sep=crcr]{%
10	5.24\\
20	8.54\\
30	11.23\\
40	13.91\\
50	16.21\\
60	18.74\\
70	20.99\\
80	23.26\\
};
\addplot [color=red,solid,line width=0.6pt,mark size=2.0pt,mark=o,mark options={solid},forget plot]
  table[row sep=crcr]{%
10	4.53\\
20	7.18\\
30	9.7\\
40	12.26\\
50	14.44\\
60	16.65\\
70	18.87\\
80	21.14\\
};
\addplot [color=red,dashed,line width=0.6pt,mark size=1.7pt,mark=triangle,mark options={solid,rotate=180},forget plot]
  table[row sep=crcr]{%
10	4.84\\
20	7.41\\
30	9.53\\
40	11.52\\
50	13.63\\
60	15.57\\
70	17.36\\
80	19.16\\
};
\addplot [color=black,dotted,line width=0.6pt,mark size=2.5pt,mark=x,mark options={solid}]
  table[row sep=crcr]{%
30	40\\
};
\addlegendentry{LDG on $G_d$};

\addplot [color=black,solid,line width=0.6pt,mark size=2.0pt,mark=o,mark options={solid}]
  table[row sep=crcr]{%
30	40\\
};
\addlegendentry{DirSVDAP  on $G_d$};

\addplot [color=black,dashed,line width=0.6pt,mark size=1.7pt,mark=triangle,mark options={solid,rotate=180}]
  table[row sep=crcr]{%
30	40\\
};
\addlegendentry{APIndexCoding on $G$};

\end{axis}
\end{tikzpicture}%
\vspace{-8pt}
\caption{\footnotesize Average index code length of LDG, Directional APIndexCoding via SVD (DirSVDAP) on $G_d$ and  APIndexCoding on the undirected subgraph $G$,  for random directed graphs $G_d(n,p)$.}\label{fig:dirSVDAP}
\vspace{-8pt}
\end{figure}

In this section, we consider the more general case in which  the side information graph $G_d$ is a directed random graph. Each directed edge $(i,j)$ exists with probability $p$ and the graph edges are chosen independently.  In this case, we can apply all the rank minimization methods described in the previous section on the graph $G$, the maximal undirected subgraph of $G_d$. In addition, we can apply the AP and Directional AP methods via SVD directly on the graph $G_d$ (SVD is needed here because the matrix $M$ is not symmetric).
Fig.~\ref{fig:dirSVDAP} depicts the { top three among these methods having the best performance}. For relatively small values of $n$, DirSVDAP on the directed graph $G_d$ has the best performance. However, for large $n$ APIndexCoding on $G$ performs better. It is worth mentioning that this directed random graph model was used  in  Fig.~\ref{fig:simu1} and the results on alternating projections there were also obtained by applying APIndexCoding algorithm on $G$.

To address the practical setting in which users have a fixed cache size, we evaluated the performance of these methods on   random directed regular graphs. These results are presented in Fig.~\ref{fig:cache} and show that APIndex coding gives the best results in terms of minimizing the index code length.


\section{Network coding via Rank Minimization}\label{sec:RnkMin}

Network coding can be thought as a generalization of routing schemes in networks. It allows intermediate nodes  to forward coded packets that are functions of their incoming packets \cite{EminaMonograph, ho2008network,Y10}. There are now efficient algorithms to  construct  capacity-achieving network codes for multicast networks and some related variants \cite{JagSanCEEJT, KM03, HoMKKESL06, lun2006minimum}. However,  making similar progress for general networks with non-multicast demands is believed to be a very hard problem \cite{DFZ05,DFZ07,Lehman2004complexity, medard2003coding}, even for two-unicast networks \cite{Sudeep}. With this backdrop, the rank minimization heuristics presented here provide a computational tool that  contribute to making progress towards constructing  linear network codes for non-multicast networks.

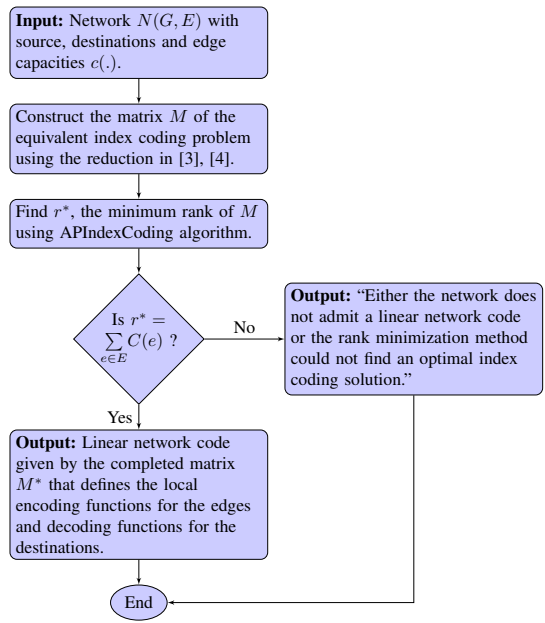
\begin{figure}[h!]
\begin{center}

\tikzstyle{decision} = [diamond, draw, fill=blue!20, 
    text width=4.5em, text badly centered, node distance=3cm, inner sep=0pt]
\tikzstyle{block} = [rectangle, draw, fill=blue!20, 
    text width=14em, rounded corners, minimum height=2em]
\tikzstyle{line} = [draw, -latex']
\tikzstyle{cloud} = [draw, ellipse,fill=blue!20, node distance=3cm, minimum height=2em]
\resizebox{0.4\textwidth}{!}{
    
\begin{tikzpicture}[node distance = 2cm, auto]
    \node [block] (input) {{\bf Input:} Network $N(G,E)$ with source, destinations and edge capacities $c(.)$.};
    \node [block, below = 0.5 cm of input] (matrix) {Construct the matrix $M$ of the equivalent index coding problem using the reduction in  \cite{RSG10, EEL13}.};
    \node [block, below =0.5 cm of matrix] (APIndexCoding) {Find $r^*$, the minimum rank of $M$ using APIndexCoding algorithm.};
    \node [decision, below =0.5cm of APIndexCoding] (decide) {Is $r^*=\underset{e\in E}{\sum} C(e)$ ?};
    \node [block, below = 0.5cm of decide, node distance=4.5cm] (Yes) {{\bf Output:} Linear network code given by the  completed matrix $M^*$ that defines the local encoding functions for the edges and decoding functions for the destinations.};
    \node [block, right of=decide, node distance=5.5cm] (No) {{\bf Output:} ``Either the network does not admit a linear network code or the rank minimization method could not  find an  optimal index coding solution."};
    \node [cloud, below =0.5cm of Yes, node distance=2cm] (end) {End};
    \path [line] (input) -- (matrix);
    \path [line] (matrix) -- (APIndexCoding);
    \path [line] (APIndexCoding) -- (decide);
    \path [line] (decide) -- node [midway,left] {Yes} (Yes);
    \path [line] (decide) -- node [midway,above] {No} (No);
    \path [line] (Yes) -- (end);
    \path [line] (No) |- (end);
\end{tikzpicture}
}
 \vspace{-0.1cm}
\caption{ \footnotesize Flowchart summarizing the different steps in our code in \cite{APlink} for constructing linear network codes for general networks using rank minimization via APIndexCoding algorithm. }\label{fig:flowchart}
\end{center}
\vspace{-0.5cm}
\end{figure}
 
 The main idea here is to use the efficient reduction in \cite{RSG10, EEL13} to transform a given network coding problem $\mathcal{NC}$ to an index coding  problem $\mathcal{IC}$ and then to apply the rank minimization methods presented here to $\mathcal{IC}$. 
Suppose that $\mathcal{NC}$ is defined over a network $N(V,E)$ with vertex set $V$,  edge set $E$, and each edge $e\in E$ has capacity $c(e)$. The reduction guarantees the following property: {\em $\mathcal{NC}$ has a network code over a certain alphabet that allows all the destinations to decode their messages with zero probability of error  if and only if $\mathcal{IC}$ has an index code of length $r^*=\sum_{e\in E} c(e)$ over the same alphabet.} This property gives the algorithm  illustrated in Fig.~\ref{fig:flowchart}. The proposed rank minimization methods are not guaranteed to find the minimum rank (i.e., minimum scalar  linear index code length), but can give a certificate (the completed matrix)  for the low rank it find. For this reason, the algorithm in Fig.~\ref{fig:flowchart} either outputs a linear network code solution or a ``do not know"  message.   This algorithm was implemented in Matlab and can be found and tested on the link in \cite{APlink}.

\section{Conclusion}\label{sec:conclusion}
%
%
%
%
%

We have investigated  the performance of different rank minimization methods for constructing  linear index codes over the reals. Our simulation results indicate that the Alternating Projections method and its directional variant, always 
outperform (smaller code length) graph coloring algorithms, and they converge much faster than the Alternating Minimization method. Due to the special structure of the underlying matrices representing the index coding problem (all ones diagonal), the well-studied  nuclear norm minimization method performs badly here. Our results lead to the following open questions that we plan to address in our future work:

\begin{enumerate}
\item Can the  proposed methods here be adapted to construct linear index codes over finite fields?
\item Under what conditions on the index coding matrices, can these methods be given theoretical guaranties to construct optimal linear index codes?
 \end{enumerate}

%

\section{Acknowldgement}
The second  author would like to thank Prof.  Stephen Boyd for suggesting the use of the alternating projections method,  Borja Peleato-Inarrea and Carlos Fernandez for discussions on the alternating minimization method and Alex Dimakis for insightful discussions on index coding and graph coloring.

\begin{appendices}

\section{Nuclear norm minimization}\label{App:Nuclear}
Using the nuclear norm minimization method to minimize the rank of the index coding matrix $M$ will always give the maximum rank $n$. This corresponds to the trivial index code obtained by replacing all the ``*" in $M$ by zero. This follows directly from the results in  \cite{recht2010guaranteed}  which we reproduce here for completion.
 Let ${\left \| \cdot \right \|}_*$ denotes the nuclear norm.
\begin{lemma}
The nuclear norm can be written as,
$${\left \|  M \right \|}_* = \max \{ \Tr(M^TX); X\in \mathbb{R}^{n\times n},\left \|  X \right \| \leq 1 \}, $$
 where $\Tr(.)$ is the trace of a matrix. 
 \end{lemma}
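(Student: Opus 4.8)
The plan is to prove the two inequalities separately, working throughout with the singular value decomposition of $M$. Write $M = U\Sigma V^T$ with $U,V$ orthogonal $n\times n$ matrices and $\Sigma = \mathrm{diag}(\sigma_1,\dots,\sigma_n)$, $\sigma_1 \geq \cdots \geq \sigma_n \geq 0$, so that by definition $\|M\|_* = \sum_{i=1}^n \sigma_i$. The identity to be shown is then just the statement that the nuclear norm is the dual norm of the spectral norm.

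For the ``$\leq$'' direction I would exhibit a feasible $X$ attaining the value $\|M\|_*$. Take $X = UV^T$; it is orthogonal, hence $\|X\| = 1$, so it is feasible in the maximization. Using cyclicity of the trace and $U^TU = I$, $\Tr(M^TX) = \Tr(V\Sigma U^T U V^T) = \Tr(V\Sigma V^T) = \Tr(\Sigma) = \sum_i \sigma_i = \|M\|_*$. Hence the maximum on the right-hand side is at least $\|M\|_*$.

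For the ``$\geq$'' direction I would show $\Tr(M^TX) \leq \|M\|_*$ for every $X$ with $\|X\| \leq 1$. Writing $\Tr(M^TX) = \Tr(V\Sigma U^T X) = \Tr(\Sigma\, U^T X V) = \sum_{i=1}^n \sigma_i\, (U^T X V)_{ii}$, it suffices to bound each diagonal entry of $U^T X V$. Since $(U^T X V)_{ii} = (Ue_i)^T X (Ve_i)$ and $Ue_i$, $Ve_i$ are unit vectors, the variational characterization $\|X\| = \max_{\|u\| = \|v\| = 1} u^T X v$ gives $|(U^T X V)_{ii}| \leq \|X\| \leq 1$. Combined with $\sigma_i \geq 0$, this yields $\Tr(M^TX) \leq \sum_i \sigma_i = \|M\|_*$; taking the maximum over feasible $X$ completes the argument.

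The only mildly delicate point is the diagonal-entry estimate $|(U^T X V)_{ii}| \leq \|X\|$, but this is immediate from the definition of the spectral norm, so no genuine obstacle arises. As an alternative to this step one could instead invoke von Neumann's trace inequality $\Tr(M^TX) \leq \sum_i \sigma_i(M)\,\sigma_i(X)$ together with $\sigma_i(X) \leq \|X\| \leq 1$, but the elementary computation above avoids needing it.
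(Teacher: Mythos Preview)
Your proof is correct and complete. Note, however, that the paper does not actually prove this lemma: it simply states the result and attributes it to \cite{recht2010guaranteed}, using it only to conclude that $\|M\|_* \geq \Tr(M) = n$ for the index-coding matrix. So there is no proof in the paper to compare against; you have supplied a standard self-contained argument (the nuclear norm as the dual of the spectral norm) where the paper relies on a citation.
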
 
In the previous lemma, if we pick $X$ to be the identity matrix then $ {\left \|  M \right \|}_*  \geq   \Tr(M) = n. $ Therefore, applying the  nuclear norm minimization to the index coding problem will always return the diagonal matrix as the optimal solution.

\section{LDG Algorithm:}   \label{App:LDG}

Birk and Kol proposed a greedy algorithm named \emph{Least Difference Greedy (LDG)} in  \cite{BirkKol,ISCOD2006} for finding scalar linear index codes. LDG can be regarded as a heuristic for finding  clique cover for graphs. The idea is to minimize the rank of the index coding matrix $M$ by greedily searching for rows that could be made equal and ``merging" them.
Two rows are mergeable if there does not exist any column in which one of these rows has a  $``0"$ and the other a $``1"$. Therefore,  the two rows can be made the  same by giving appropriate values to  $``*"$.  
For instance, in the example of Fig.~\ref{fig:ex1}, we start from matrix
\begin{equation*}
M_0=\begin{tabular}{m{8pt} m{8pt} m{5pt} m{3pt}m{3pt}}
 & $X_1$ & $X_2$ & $X_3$ & $X_4$ \\         
 $row_1 $ & \multicolumn{4}{c}{\multirow{4}{*}{$\begin{pmatrix}
 1 & * & * & 0\\ 
 * & 1 & * & 0\\ 
 0 & * & 1 & *\\ 
 * & 0 & 0 & 1
\end{pmatrix}$}} \\
$row_2 $& \\
$row_3 $&  \\
$row_4$ & \\
\end{tabular}
\end{equation*}
Row $1$ and row $2$ are mergeable. After  merging  them we get 
\begin{equation*}
M_1=\begin{tabular}{m{8pt} m{8pt} m{5pt} m{3pt}m{3pt}}
 & $X_1$ & $X_2$ & $X_3$ & $X_4$ \\         
 $row_1$ & \multicolumn{4}{c}{\multirow{3}{*}{$\begin{pmatrix}
 1 & 1 & * & 0\\
 0 & * & 1 & *\\ 
 * & 0 & 0 & 1
\end{pmatrix}$}} \\
$row_3 $&  \\
$row_4$ & \\
\end{tabular}
\end{equation*}
There are no more mergeable rows in $M_1$. The remaining ``*'''s can be set arbitrarily, for example they could be set all to $0$.  And, the LDG algorithm will output the $3$ transmitted messages  $X_1+X_2$, $X_3$ and $X_4$. For completion, we give   next the details of the  LDG algorithm as proposed in \cite{BirkKol,ISCOD2006}.

\begin{algorithm}[h!]
 \SetAlgoRefName{LDG}
 \KwIn{Index coding $n\times n$ matrix $M$.}
 \KwOut{Linear index code over $GF(2)$.}
  Set $i=1$\;
 \While{$i<n$}{
 Row set $\mathcal{S} :=\{row_{i+1},\cdots, row_{n} \}$ \;
 \While{$\exists $ at least one row in $\mathcal{S}$ mergeable with $row_i$}{
Randomly pick a mergeable row $row_j$ from  $\mathcal{S}$\;
Merge $row_j$ into $row_i$ column by column by using the following rules: $``*"+``*"=``*",  \  1+``*"=1,  \  0+``*"=0$\;
Delete $row_j$ from matrix $M$\;
 }
 $i=i+1$ \;
 }
 \ForAll{$row_i$ in $M$ }{
Create a coded message by XORing all messages that corresponding to the positions  of $1$ in $row_i$\;
   }
%
\caption{The Least Difference Greedy Clique-Cover method~\cite{BirkKol,ISCOD2006}. } \label{alg:LDG}
\end{algorithm}

\section{Alternating Minimization Algorithm:} \label{App:AM}


The Alternating Minimization (AltMin) is now a well studied method for rank minimization~\cite{fazel2004rank, jain2013low, Hardt2014}. We briefly describe it here for completion.

If the  matrix $M \in \mathbb{R}^{n \times n}$  has rank $r$ then it can be factored as  $M=EF^T,$ where $E$ and $F \in \mathbb{R}^{n \times r}$ and $F^T$ is the transpose of $F$. Thus, the problem becomes the following 
\begin{equation}\underset {M\in \mathcal{C}, \  E,  F \in \mathbb{R} ^{n \times {r}} }  {\text {argmin}}||M- EF^T||_F,\label{eq:AM}\end{equation}
where $\| \cdot \|_F$ denotes the Frobenius norm.  The optimization problem in \eqref{eq:AM} is not convex. However, it will become convex if either $E$ or $F$ is fixed. Algorithm~\ref{alg:AM} \cite{fazel2004rank} shows the  iterations between  fixing $E$ and $F$  and solves the resulting convex problem at each time (Steps 5 and 6, respectively). Each of these steps is a least squares problem that has an analytical solution~\cite[p.  4-5]{BoydCVX}.

\begin{algorithm}[h!]
\SetAlgoRefName{AltMin}
 \KwIn{Matrix $M$}
 \KwOut{ Competed matrix $M^*$ with low rank $r^*$}
 Set $r_0=n$ \;
 \While{$\exists M \in \mathcal{C}$  such that $\rank M \leq r_k$:}{
Randomly pick $E_0 \in \mathbb{R} ^{n \times r}$. Set $i=1$ and $r_k=r_k-1$\; \nllabel{Proj:MIN1} 
\Repeat{$|e_i-e_{i-1}|\leq \epsilon,$ or $e_i \leq \epsilon$ }{
   $F_i=\underset {M\in \mathcal{C}, \ F \in \mathbb{R} ^{n \times {r_k}} }  {\text {argmin}}||M- E_{i-1}F^T||_F ,$ \nllabel{Proj:MIN2} \\
 $(M_i, E_i) =\underset {M\in \mathcal{C}, \ E \in \mathbb{R} ^{n \times{ r_k}} }  {\text {argmin}}||M- EF_i^T||_F ,$ \\
 $e_i=||M_i- E_i F_i^T||_F$ \;
  $i=i+1;$ }
 }
\Return $M^*=E_iF_i^T$ and $r^*=r_k$.
\caption{Alternating Minimization~\cite{fazel2004rank}.} \label{alg:AM}
\end{algorithm}

\section{Directional Alternating Projections:} \label{App:DAP}
The Directional Alternating Projections (DirAP)  method \cite{Control1987} can  converge faster than AP and can give a low rank close to that of AP.  Fig.~\ref{fig:dirAP}  depicts geometrically  the first steps of the  DirAP method starting with a random point $e_0$ followed by the  first four projection  points  $d_1,c_1,d_2, c_2$. In the AP method, the fourth projection point would be $c_2 \in\mathcal{C}$,  whereas in DirAP the fourth projection is onto the tangent space on $\mathcal{C}$ at point $c_1$ which gives point $e_1$ obtained by the following equation:
 $$e_1=d_1+ \lambda(d_2 - d_1), \text{ with }   \lambda =\frac{\| d_1-c_1 \|_F^2 }{\Tr(d_1-d_2)^T(d_1-c_1)}.$$

It can be shown \cite{Control1987}  that if  $\mathcal{C}$ and $\mathcal{D}$ are two convex regions with intersection, then the series of these projections starting from $e_0, e_1,e_2, \dots$  will converge to a point in $\mathcal{C} \cap \mathcal{D}$.

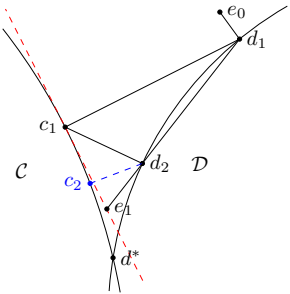
\begin{figure}[h!]
\centering 
\resizebox{0.22\textwidth}{!}{
\begin{tikzpicture}
      \draw(2, 0) arc (10:38:10cm) ;
      \draw(4.8, 4.8) arc (120:176:6cm);
      \filldraw[black] (3.67,4.7) circle (1pt) node[anchor=west] {$e_0$}; 
      \draw(3.67,4.7)--(4,4.247); 
      \filldraw[black] (4,4.247) circle (1pt) node[anchor=west] {$d_{1}$};
      \draw(4,4.247) --(1.078,2.771);
      \filldraw[black] (1.078,2.771) circle (1pt) node[anchor=east] {$c_1$} node [below left  = 0.7 cm] {$\mathcal{C}$};
      \draw [red,dashed] (2.378, 0.1974)--(0.078, 4.7507) ;
      \draw(1.078,2.771)--(2.372, 2.161);
      \filldraw[black] (2.372, 2.161) circle (1pt) node[anchor=west] {$d_2$}node [right  = 0.7 cm] {$\mathcal{D}$};
      \draw[blue,dashed](2.372, 2.161)--(1.496, 1.827);
      \filldraw[blue] (1.496, 1.827) circle (1pt) node[anchor=east] {$c_2$};
      \draw (4,4.247)--(1.777, 1.3986);
      \filldraw[black] (1.777, 1.3986) circle (1pt) node[anchor=west] {$e_1$};
      \filldraw[black] (1.88,0.58)circle (1pt) node[anchor=west] {$d^*$};
\end{tikzpicture}
}
\caption{\footnotesize Directional Alternating Projection (DirAP) method. The projection points starting from a random point $e_0$ are $d_1, c_1, d_2, e_1$. The difference with AP method is that in AP the fourth projection point is $c_2\in \mathcal{C}$ instead of $e_1$.}
 \label{fig:dirAP}
\end{figure}

\section{Proof of Lemma~\ref{lemma:error}} \label{App:ERROR}
The  decoding function in \eqref{eq:decode}, can be rewritten as 
 \begin{equation}
 \hat{\mathbf{X}}=\mathbf{Y} -M^*\circ \Phi  \mathbf{X},
 \end{equation}
 where $\circ$ denotes the entry-wise matrix product (Hadamard product), and $\Phi$ is a $n\times n$ matrix, with $1$'s in the $(i,j)$th positions if edge $(i,j)$ in $G_d,$ and $0$'s in all the other positions. For instance, in the example of Fig.~\ref{fig:ex1}, $$\Phi=\begin{bmatrix*}  0 &1 & 1 & 0\\ 1 & 0 & 1 & 0\\ 0 & 1 & 0 & 1 \\ 1& 0 & 0&0
 \end{bmatrix*}.$$

 We can prove Lemma~\ref{lemma:error} as following:
\begin{align}
\| \mathbf{X}-\hat{\mathbf{X}} \| &= \| \mathbf{X} - M^* A^\dagger A\mathbf{X} -  M^*\circ \Phi  \mathbf{X} \|  \\
&= \| \mathbf{X} - M^*\mathbf{X} - M^*\circ \Phi  \mathbf{X}  \|   \\
&= \| ( I + M^* \circ \Phi  - M^* ) \mathbf{X} \| \\
&= \| ( M_{\mathcal{D}} - M^*  ) \mathbf{X} \| \label{emu:13} \\
&\leq \| M_{\mathcal{D}} - M^*\| \| \mathbf{X} \| \label{emu:14} \\
&=\|U \begin{bmatrix*} 0  & 0 \\ 0 &  \Sigma_{n-r^*} \end{bmatrix*} V^T \| X_{\max} \sqrt{n}\label{emu:15} \\
&\leq  \sigma_{r^*+1} X_{\max} \sqrt{n} \label{emu:16} \\
& \leq \epsilon X_{\max}\sqrt{n}.
\end{align}
The matrix $M_{\mathcal{D}}=M_{i-1}$ in \eqref{emu:13} is the matrix   in $\mathcal{D}$  whose projection on $\mathcal{C}$ in the last iteration of APIndexCoding gives the matrix $M^*=M_i$ returned by the algorithm.  Eq~\eqref{emu:15} follows from the fact that if $M_{\mathcal{D}}=U\Sigma V^T$  is the SVD of $M_{\mathcal{D}}$, with $\Sigma = \text{diag}(\sigma_1, \sigma_2,  \dots, \sigma_n )$, then $$M^*=U \begin{bmatrix*} \Sigma_{r^*}   & 0 \\ 0 &  0\end{bmatrix*} V^T.$$
Eq~\eqref{emu:14} follows from the definition of $\ell_2$ norm $$\|M\| =\sup\limits_{\substack{X\in \mathbb{R}^n\\X\neq {\bf 0}}}\frac{\|MX \|}{\|X \|}.$$



\section{Figures} \label{App:fig}
  \setlength\figureheight{3.83cm}

 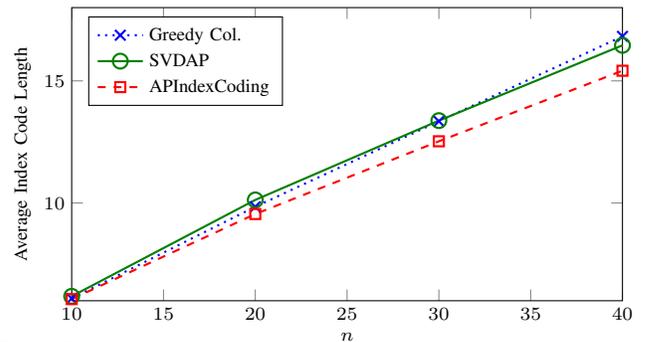
\begin{figure}[h!]
\centering \scriptsize
  \setlength\figureheight{3.9cm} 
\setlength\figurewidth{7.7cm}
%
%
%
\definecolor{mycolor1}{rgb}{0.00000,0.49804,0.00000}%
\begin{tikzpicture}

\begin{axis}[%
width=0.950920245398773\figurewidth,
height=\figureheight,
at={(0\figurewidth,0\figureheight)},
scale only axis,
xmin=10,
xmax=40,
xlabel={$n$},
ymin=6,
ymax=18,
ylabel={Average Index Code Length},
legend style={at={(0.03,0.97)},anchor=north west,draw=black,fill=white,legend cell align=left}
]
\addplot [color=blue,dotted,line width=0.8pt,mark size=2.8pt,mark=x,mark options={solid}]
  table[row sep=crcr]{%
10	6.092\\
20	9.848\\
30	13.346\\
40	16.824\\
};
\addlegendentry{Greedy Col.};

\addplot [color=mycolor1,solid,line width=0.8pt,mark size=2.8pt,mark=o,mark options={solid}]
  table[row sep=crcr]{%
10	6.182\\
20	10.128\\
30	13.374\\
40	16.456\\
};
\addlegendentry{SVDAP};

\addplot [color=red,dashed,line width=0.8pt,mark size=1.9pt,mark=square,mark options={solid}]
  table[row sep=crcr]{%
10	6.066\\
20	9.55\\
30	12.528\\
40	15.416\\
};
\addlegendentry{APIndexCoding};

\end{axis}
\end{tikzpicture}%
\vspace{-0.3cm}
\caption{\footnotesize  Average index code length obtained  using  AP via SVD (SVDAP), ~\ref{alg:AP} and greedy coloring on   undirected random graphs $G(n,p)$ with $p=0.2.$}\label{fig:SVDundirRmin}
 \end{figure}

  \begin{figure}[h!]
\centering \scriptsize
%
%
%
\definecolor{mycolor1}{rgb}{0.00000,0.49804,0.00000}%
\begin{tikzpicture}

\begin{axis}[%
width=0.950920245398773\figurewidth,
height=\figureheight,
at={(0\figurewidth,0\figureheight)},
scale only axis,
xmin=10,
xmax=40,
xlabel={$n$},
ymin=0,
ymax=800,
ylabel={Average Running Time per Graph (sec.)},
legend style={at={(0.03,0.97)},anchor=north west,draw=black,fill=white,legend cell align=left}
]
\addplot [color=mycolor1,solid,line width=0.8pt,mark size=2.8pt,mark=o,mark options={solid}]
  table[row sep=crcr]{%
10	3.32598239555379\\
20	115.998212605876\\
30	402.831817777388\\
40	745.053935177617\\
};
\addlegendentry{SVDAP};

\addplot [color=red,dashed,line width=0.8pt,mark size=1.9pt,mark=square,mark options={solid}]
  table[row sep=crcr]{%
10	0.0172033523372477\\
20	0.0953963620480892\\
30	0.504070875657836\\
40	1.91836036770108\\
};
\addlegendentry{APIndexCoding};

\end{axis}
\end{tikzpicture}%
 \vspace{-0.3cm}
\caption{\footnotesize  Average running time of AP  via eigenvalue decomposition (\ref{alg:AP}) vs.  SVD decomposition (SVDAP) on   undirected random graphs $G(n,p)$ with $p=0.2.$}\label{fig:SVDundirTime}
\vspace{-0.2cm}
 \end{figure}
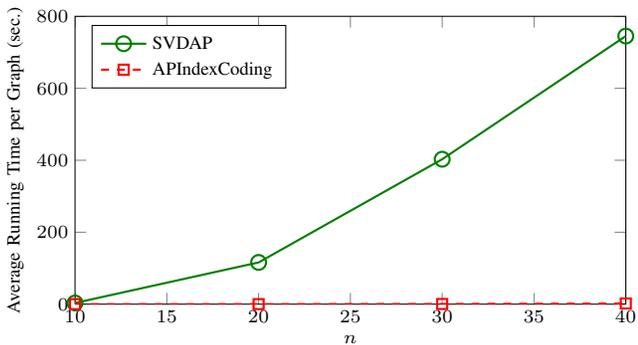

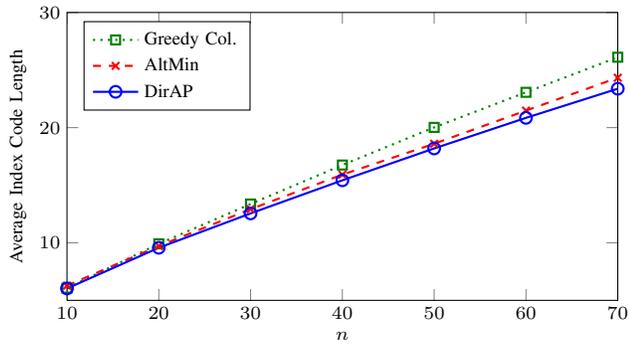
\begin{figure}[h!]\centering\scriptsize
%
%
%
\definecolor{mycolor1}{rgb}{0.00000,0.49804,0.00000}%
\begin{tikzpicture}

\begin{axis}[%
width=0.950920245398773\figurewidth,
height=\figureheight,
at={(0\figurewidth,0\figureheight)},
scale only axis,
xmin=10,
xmax=70,
xlabel={$n$},
ymin=5,
ymax=30,
ylabel={Average Index Code Length},
legend style={at={(0.03,0.97)},anchor=north west,draw=black,fill=white,legend cell align=left}
]
\addplot [color=mycolor1,dotted,line width=0.8pt,mark size=1.7pt,mark=square,mark options={solid}]
  table[row sep=crcr]{%
10	6.061\\
20	9.912\\
30	13.368\\
40	16.753\\
50	20.022\\
60	23.066\\
70	26.113\\
};
\addlegendentry{Greedy Col.};

\addplot [color=red,dashed,line width=0.8pt,mark=x,mark options={solid}]
  table[row sep=crcr]{%
10	6.28\\
20	9.72\\
30	12.92\\
40	15.92\\
50	18.6\\
60	21.48\\
70	24.3333333333333\\
};
\addlegendentry{AltMin};

\addplot [color=blue,solid,line width=0.8pt,mark size=2.2pt,mark=o,mark options={solid}]
  table[row sep=crcr]{%
10	6.033\\
20	9.567\\
30	12.551\\
40	15.425\\
50	18.193\\
60	20.849\\
70	23.38\\
};
\addlegendentry{DirAP};

\end{axis}
\end{tikzpicture}
\caption{\footnotesize Average index code length obtained by Alternating Minimization and Directional APIndexCoding (DirAP) for undirected random graphs $G(n,p)$ with $p=0.2$.}\label{fig:AMp02}
\vspace{-3pt}
\end{figure}


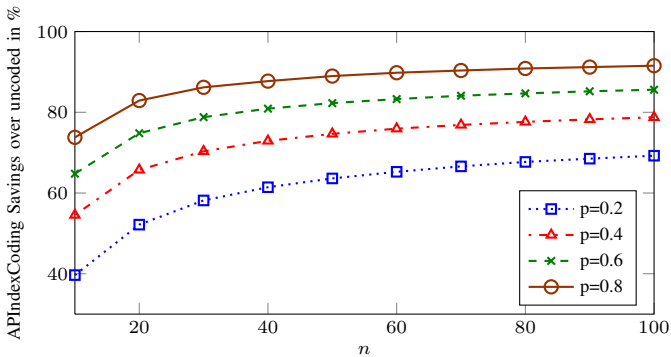
\begin{figure}[h!] \centering\scriptsize
%
%
%
\definecolor{mycolor1}{rgb}{0.00000,0.49804,0.00000}%
\definecolor{mycolor2}{rgb}{0.60000,0.20000,0.00000}%
\begin{tikzpicture}

\begin{axis}[%
width=\figurewidth,
height=0.981711197994467\figureheight,
at={(0\figurewidth,0\figureheight)},
scale only axis,
xmin=10,
xmax=100,
xlabel={$n$},
ymin=30,
ymax=100,
ytick={ 40,  60,  80, 100},
ylabel={APIndexCoding Savings over uncoded in \%},
legend style={at={(0.97,0.03)},anchor=south east,draw=black,fill=white,legend cell align=left}
]
\addplot [color=blue,dotted,line width=0.8pt,mark size=1.8pt,mark=square,mark options={solid}]
  table[row sep=crcr]{%
10	39.67\\
20	52.165\\
30	58.1633333333333\\
40	61.4375\\
50	63.614\\
60	65.2516666666667\\
70	66.6\\
80	67.70375\\
90	68.51\\
100	69.242\\
};
\addlegendentry{p=0.2};

\addplot [color=red,dash pattern=on 1pt off 3pt on 3pt off 3pt,line width=0.8pt,mark=triangle,mark options={solid}]
  table[row sep=crcr]{%
10	54.55\\
20	65.74\\
30	70.33\\
40	72.92\\
50	74.66\\
60	75.9383333333333\\
70	76.8742857142857\\
80	77.625\\
90	78.2477777777778\\
100	78.756\\
};
\addlegendentry{p=0.4};

\addplot [color=mycolor1,dashed,line width=0.8pt,mark=x,mark options={solid}]
  table[row sep=crcr]{%
10	64.7888888888889\\
20	74.8111111111111\\
30	78.7925925925926\\
40	80.9\\
50	82.2733333333333\\
60	83.2722222222222\\
70	84.1047619047619\\
80	84.675\\
90	85.2024691358025\\
100	85.5911111111111\\
};
\addlegendentry{p=0.6};

\addplot [color=mycolor2,solid,line width=0.8pt,mark size=2.5pt,mark=o,mark options={solid}]
  table[row sep=crcr]{%
10	73.79\\
20	82.905\\
30	86.18\\
40	87.72\\
50	88.974\\
60	89.795\\
70	90.3485714285714\\
80	90.85125\\
90	91.1906594816526\\
100	91.54\\
};
\addlegendentry{p=0.8};

\end{axis}
\end{tikzpicture}%
\vspace{-0.5cm}
\caption{\footnotesize Savings in percentage of APIndexCoding over uncoded transmissions.}\label{fig:APPer2}
\vspace{-0.2cm}
 \end{figure}

  \setlength\figureheight{4cm}

\begin{figure}[h!] \centering\scriptsize
%
%
%
\definecolor{mycolor1}{rgb}{0.00000,0.49804,0.00000}%
\definecolor{mycolor2}{rgb}{0.60000,0.20000,0.00000}%
\begin{tikzpicture}

\begin{axis}[%
width=\figurewidth,
height=0.975877165314594\figureheight,
at={(0\figurewidth,0\figureheight)},
scale only axis,
xmin=10,
xmax=100,
xlabel={$n$},
ymin=0,
ymax=15,
ylabel={APIndexCoding Savings over Greedy Col. in \%},
legend style={at={(0.97,0.03)},anchor=south east,draw=black,fill=white,legend cell align=left}
]
\addplot [color=blue,dotted,line width=0.8pt,mark size=1.8pt,mark=square,mark options={solid}]
  table[row sep=crcr]{%
10	0.461969971951816\\
20	3.48062953995158\\
30	6.11160981448235\\
40	7.9269384587835\\
50	9.13495155329137\\
60	9.61154946674759\\
70	10.466051392027\\
80	10.9744331886155\\
90	11.0954263128176\\
100	11.2271992611406\\
};
\addlegendentry{p=0.2};

\addplot [color=red,dash pattern=on 1pt off 3pt on 3pt off 3pt,line width=0.8pt,mark=triangle,mark options={solid}]
  table[row sep=crcr]{%
10	2.06851971557854\\
20	6.2012320328542\\
30	8.55763303883296\\
40	9.72581048420701\\
50	10.472018089316\\
60	10.8001235712079\\
70	11.1086705837131\\
80	10.8743278231428\\
90	11.0661881615409\\
100	11.0012568077084\\
};
\addlegendentry{p=0.4};

\addplot [color=mycolor1,dashed,line width=0.8pt,mark=x,mark options={solid}]
  table[row sep=crcr]{%
10	3.02937576499388\\
20	7.18526100307063\\
30	9.81256890848952\\
40	10.6664934390022\\
50	11.2779446112779\\
60	11.3890523837551\\
70	11.6541685046317\\
80	11.5440115440115\\
90	11.5032486709982\\
100	11.4872704934817\\
};
\addlegendentry{p=0.6};

\addplot [color=mycolor2,solid,line width=0.8pt,mark size=2.5pt,mark=o,mark options={solid}]
  table[row sep=crcr]{%
10	1.50319428786171\\
20	9.38245428041347\\
30	11.9558292631132\\
40	11.5751575157516\\
50	12.6307448494453\\
60	13.3087922978904\\
70	12.8145567169957\\
80	13.5993389210247\\
90	12.7794715447154\\
100	13.0226182316655\\
};
\addlegendentry{p=0.8};

\end{axis}
\end{tikzpicture}%
\vspace{-0.5cm}
\caption{\footnotesize Savings in percentage of APIndexCoding over greedy coloring.}\label{fig:APPer3}
 \end{figure}
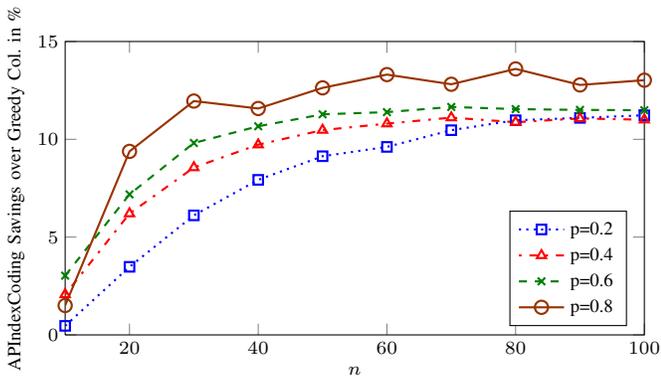
 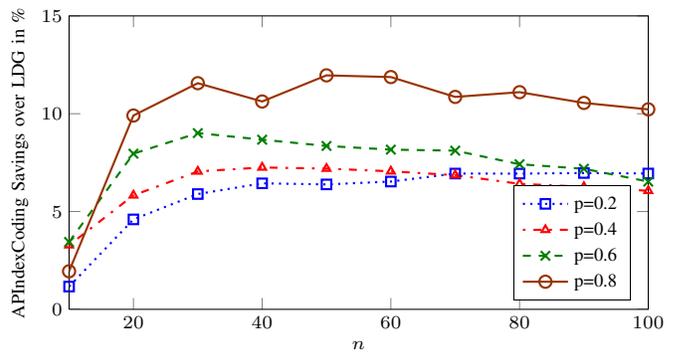
\begin{figure}[h!] \centering\scriptsize
%
%
%
\definecolor{mycolor1}{rgb}{0.00000,0.49804,0.00000}%
\definecolor{mycolor2}{rgb}{0.60000,0.20000,0.00000}%
\begin{tikzpicture}

\begin{axis}[%
width=\figurewidth,
height=0.975877165314594\figureheight,
at={(0\figurewidth,0\figureheight)},
scale only axis,
xmin=10,
xmax=100,
xlabel={$n$},
ymin=0,
ymax=15,
ylabel={APIndexCoding Savings over LDG in \%},
legend style={at={(0.97,0.03)},anchor=south east,draw=black,fill=white,legend cell align=left}
]
\addplot [color=blue,dotted,line width=0.8pt,mark size=1.8pt,mark=square,mark options={solid}]
  table[row sep=crcr]{%
10	1.16317169069462\\
20	4.59712804148385\\
30	5.89337932068681\\
40	6.44143870928609\\
50	6.38571575589173\\
60	6.5318748318838\\
70	6.94156981372393\\
80	6.93729063861975\\
90	6.968881302521\\
100	6.94339394306114\\
};
\addlegendentry{p=0.2};

\addplot [color=red,dash pattern=on 1pt off 3pt on 3pt off 3pt,line width=0.8pt,mark size=1.7pt,mark=triangle,mark options={solid}]
  table[row sep=crcr]{%
10	3.27729304107257\\
20	5.82737768004397\\
30	7.04887218045114\\
40	7.25233324770956\\
50	7.19308526223263\\
60	7.05594540655379\\
70	6.85845799769851\\
80	6.41501542322372\\
90	6.26286808714388\\
100	6.05819403909083\\
};
\addlegendentry{p=0.4};

\addplot [color=mycolor1,dashed,line width=0.8pt,mark size=2.5pt,mark=x,mark options={solid}]
  table[row sep=crcr]{%
10	3.44302254722729\\
20	7.95777507105157\\
30	9.01001112347052\\
40	8.67313056182761\\
50	8.35248161764706\\
60	8.16388775925172\\
70	8.11158010644154\\
80	7.417351904682\\
90	7.18599969025863\\
100	6.53020037480178\\
};
\addlegendentry{p=0.6};

\addplot [color=mycolor2,solid,line width=0.8pt,mark size=2.3pt,mark=o,mark options={solid}]
  table[row sep=crcr]{%
10	1.9453797231575\\
20	9.90777338603425\\
30	11.561433447099\\
40	10.6259097525473\\
50	11.9610348131587\\
60	11.873920552677\\
70	10.8589523683863\\
80	11.1016640349812\\
90	10.5523710265763\\
100	10.2228510788822\\
};
\addlegendentry{p=0.8};

\end{axis}
\end{tikzpicture}%
\vspace{-0.45cm}
\caption{\footnotesize Savings in percentage of APIndexCoding over LDG.}\label{fig:APPer4}
\vspace{-0.15cm}
 \end{figure}

 \begin{figure}[h!]
  \centering
\scriptsize
%
%
%
\definecolor{mycolor1}{rgb}{0.00000,0.49804,0.00000}%
\definecolor{mycolor2}{rgb}{0.60000,0.20000,0.00000}%
\begin{tikzpicture}

\begin{axis}[%
width=0.950920245398773\figurewidth,
height=\figureheight,
at={(0\figurewidth,0\figureheight)},
scale only axis,
xmin=10,
xmax=100,
xlabel={$n$},
ymin=0,
ymax=0.0035,
ylabel={Average Running Time per Iteration},
legend style={at={(0.03,0.97)},anchor=north west,draw=black,fill=white,legend cell align=left}
]
\addplot [color=blue,dotted,line width=0.8pt,mark=square,mark options={solid}]
  table[row sep=crcr]{%
10	2.83063625541989e-05\\
20	8.65014122128179e-05\\
30	0.00019131001277961\\
40	0.000380417483028863\\
50	0.000585716756327976\\
60	0.000810925880286377\\
70	0.00121003026843474\\
80	0.0018661746426615\\
90	0.00234124597016261\\
100	0.00293754670978788\\
};
\addlegendentry{p=0.2};

\addplot [color=red,dash pattern=on 1pt off 3pt on 3pt off 3pt,line width=0.8pt,mark=triangle,mark options={solid}]
  table[row sep=crcr]{%
10	3.3330858156128e-05\\
20	8.81473072122001e-05\\
30	0.000202580171305845\\
40	0.000420149389772727\\
50	0.000671839318857035\\
60	0.000887518686411197\\
70	0.0013052355346291\\
80	0.00190381443877226\\
90	0.00243063101010951\\
100	0.00287145155706741\\
};
\addlegendentry{p=0.4};

\addplot [color=mycolor1,dashed,line width=0.8pt,mark=x,mark options={solid}]
  table[row sep=crcr]{%
10	3.23451364315386e-05\\
20	9.02867687808608e-05\\
30	0.000210263150908481\\
40	0.000424985622178667\\
50	0.000663028141503632\\
60	0.000949788345715222\\
70	0.00140651769647174\\
80	0.00203741999461494\\
90	0.00262793502002326\\
100	0.00321603849418877\\
};
\addlegendentry{p=0.6};

\addplot [color=mycolor2,solid,line width=0.8pt,mark=o,mark options={solid}]
  table[row sep=crcr]{%
10	3.0907783240657e-05\\
20	9.74104854208688e-05\\
30	0.000207556758622956\\
40	0.000421415086215678\\
50	0.000675267100375473\\
60	0.000962025058274185\\
70	0.0014785912064701\\
80	0.00192221370406396\\
90	0.00245501647941408\\
100	0.00338201404584767\\
};
\addlegendentry{p=0.8};

\end{axis}
\end{tikzpicture}%
 \vspace{-0.25cm}
 \caption{\footnotesize Average running time of one iteration by using APIndexCoding on random undirected graphs $G(n,p)$.}
 \vspace{-0.15cm}
\end{figure}
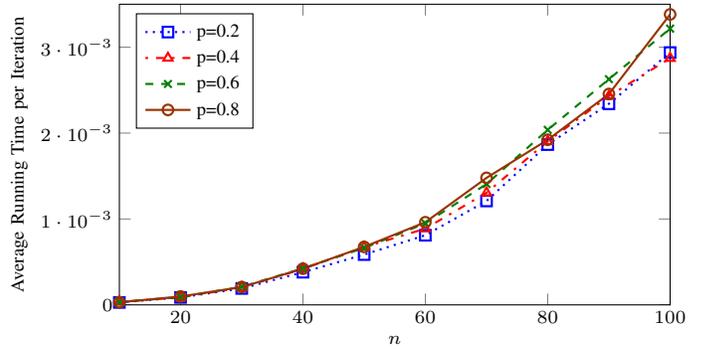

   \setlength\figureheight{3.85cm} 
 \begin{figure}[h!]
  \centering
\scriptsize
%
%
%
\definecolor{mycolor1}{rgb}{0.00000,0.49804,0.00000}%
\definecolor{mycolor2}{rgb}{0.60000,0.20000,0.00000}%
\begin{tikzpicture}

\begin{axis}[%
width=0.950920245398773\figurewidth,
height=\figureheight,
at={(0\figurewidth,0\figureheight)},
scale only axis,
xmin=10,
xmax=100,
xlabel={$n$},
ymin=0,
ymax=120000,
ylabel={Average Number of Iterations},
legend style={at={(0.03,0.97)},anchor=north west,draw=black,fill=white,legend cell align=left}
]
\addplot [color=blue,dotted,line width=0.8pt,mark=square,mark options={solid}]
  table[row sep=crcr]{%
10	1565.591\\
20	1371.574\\
30	2774.408\\
40	3856.924\\
50	5175.397\\
60	5485.08\\
70	6650.98\\
80	7707.76\\
90	8793.18\\
100	9771.7\\
};
\addlegendentry{p=0.2};

\addplot [color=red,dash pattern=on 1pt off 3pt on 3pt off 3pt,line width=0.8pt,mark=triangle,mark options={solid}]
  table[row sep=crcr]{%
10	1221.11111111111\\
20	3190.79555555556\\
30	5338.76666666667\\
40	7250.17777777778\\
50	9047.77777777778\\
60	11746.2\\
70	11007.2666666667\\
80	13490.0444444444\\
90	17081\\
100	16800.6666666667\\
};
\addlegendentry{p=0.4};

\addplot [color=mycolor1,dashed,line width=0.8pt,mark=x,mark options={solid}]
  table[row sep=crcr]{%
10	1555.29222222222\\
20	5544.33333333333\\
30	9940.96666666667\\
40	13971.2\\
50	16645.0555555556\\
60	22196.4666666667\\
70	25711\\
80	30545.2444444444\\
90	32646\\
100	31657.5555555556\\
};
\addlegendentry{p=0.6};

\addplot [color=mycolor2,solid,line width=0.8pt,mark=o,mark options={solid}]
  table[row sep=crcr]{%
10	2234.43\\
20	9414.22\\
30	29020.35\\
40	42950.96\\
50	63171.7\\
60	70827.66\\
70	86025.8\\
80	89377.84\\
90	99524.4110854504\\
100	100216.333333333\\
};
\addlegendentry{p=0.8};

\end{axis}
\end{tikzpicture}%
 \vspace{-0.32cm}
 \caption{\footnotesize Average iteration number of one Graph by using APIndexCoding on random undirected graphs $G(n,p)$.} \label{fig:undirectedIt}
 \vspace{-0.15cm}
\end{figure}
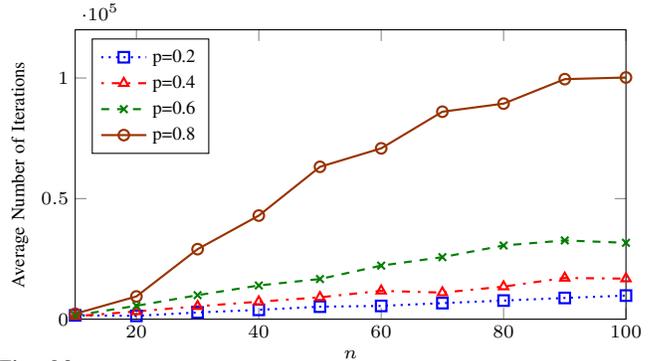

\begin{figure}[h!]
\centering\scriptsize
  \setlength\figureheight{5cm} 
\setlength\figurewidth{9.3cm}
%
%
\begin{tikzpicture}

\begin{axis}[%
width=\figurewidth,
height=\figureheight,
at={(0\figurewidth,0\figureheight)},
scale only axis,
xmin=0,
xmax=1,
ymin=0,
ymax=1,
hide axis,
axis x line*=bottom,
axis y line*=left
]
\node[below, right, inner sep=0mm, text=black, draw=white]
at (rel axis cs:0.751707203474581,0.858257481186416)(rel axis cs:0.0428751893019417,0.0474934036939302) {$c=8$};
\node[below, right, inner sep=0mm, text=black, draw=white]
at (rel axis cs:0.750262508214984,0.679435464725512)(rel axis cs:0.0375478755321907,0.0474934036939302) {$c=16$};
\node[below, right, inner sep=0mm, text=black, draw=white]
at (rel axis cs:0.781865217018595,0.399214271309872)(rel axis cs:0.0428751893019417,0.0474934036939302) {$c=32$};
\end{axis}

\begin{axis}[%
width=0.775\figurewidth,
height=0.815\figureheight,
at={(0.13\figurewidth,0.11\figureheight)},
scale only axis,
xmin=10,
xmax=100,
xlabel={$n$},
ymin=0,
ymax=80,
ylabel={Average Index Code length},
legend style={at={(0.03,0.97)},anchor=north west,draw=black,fill=white,legend cell align=left}
]
\addplot [color=blue,dotted,mark size=1.5pt,mark=triangle,mark options={solid}]
  table[row sep=crcr]{%
10	2.574\\
20	10.094\\
30	17.4\\
40	25.17\\
50	33.536\\
60	42.286\\
70	51.276\\
80	60.32\\
90	69.372\\
100	78.576\\
};
\addlegendentry{Greedy Col.};

\addplot [color=black!50!green,dashed,mark size=2.0pt,mark=x,mark options={solid}]
  table[row sep=crcr]{%
10	2.56\\
20	10.16\\
30	17.718\\
40	25.426\\
50	33.78\\
60	42.492\\
70	51.4\\
80	60.428\\
90	69.494\\
100	78.648\\
};
\addlegendentry{LDG};

\addplot [color=red,solid,mark size=1.9pt,mark=o,mark options={solid}]
  table[row sep=crcr]{%
10	2.464\\
20	9.704\\
30	17.046\\
40	24.944\\
50	33.372\\
60	42.214\\
70	51.21\\
80	60.274\\
90	69.358\\
100	78.56\\
};
\addlegendentry{APIndexCoding};

\addplot [color=blue,dotted,mark size=1.5pt,mark=triangle,mark options={solid},forget plot]
  table[row sep=crcr]{%
20	4.438\\
30	11.274\\
40	17.772\\
50	24.138\\
60	30.344\\
70	36.656\\
80	43.164\\
90	50.006\\
100	56.988\\
};
\addplot [color=black!50!green,dashed,mark size=2.0pt,mark=x,mark options={solid},forget plot]
  table[row sep=crcr]{%
20	4.414\\
30	10.988\\
40	17.354\\
50	23.936\\
60	30.708\\
70	37.372\\
80	44.212\\
90	51.202\\
100	58.292\\
};
\addplot [color=red,solid,mark size=1.9pt,mark=o,mark options={solid},forget plot]
  table[row sep=crcr]{%
20	4.044\\
30	10.148\\
40	16.194\\
50	22.314\\
60	28.538\\
70	34.718\\
80	41.234\\
90	48.006\\
100	55.118\\
};
\addplot [color=blue,dotted,mark size=1.5pt,mark=triangle,mark options={solid},forget plot]
  table[row sep=crcr]{%
40	7.436\\
50	13.642\\
60	19.438\\
70	25.408\\
80	31.398\\
90	37.52\\
100	43.688\\
};
\addplot [color=black!50!green,dashed,mark size=2.0pt,mark=x,mark options={solid},forget plot]
  table[row sep=crcr]{%
40	7.178\\
50	12.974\\
60	18.602\\
70	24.202\\
80	29.994\\
90	35.682\\
100	41.584\\
};
\addplot [color=red,solid,mark size=1.9pt,mark=o,mark options={solid},forget plot]
  table[row sep=crcr]{%
40	6.396\\
50	11.86\\
60	17.102\\
70	22.336\\
80	27.614\\
90	32.938\\
100	38.366\\
};
\end{axis}
\end{tikzpicture}
\caption{\footnotesize Average index code length for  directed $c$-regular random graphs for fixed  cache size $c$ fixed. }
 \label{fig:cache}
\end{figure}
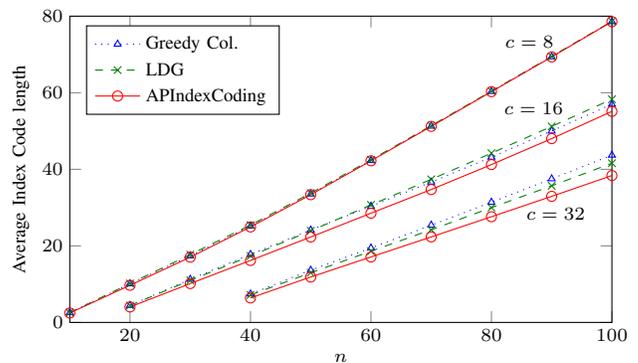

\end{appendices}
\newpage

\bibliographystyle{ieeetr}
\bibliography{coding}

\end{document}